\newtheorem{theorem}{Theorem}
\newtheorem{fact}[theorem]{Fact}
\newtheorem{lemma}[theorem]{Lemma}
\newtheorem*{markov}{Markov's inequality}
\newtheorem*{chebyshev}{Chebyshev's inequality}
\newcommand{\comment}[1]{}
\newcommand\bs[1]{\boldsymbol{#1}}
\begin{document}
\title{Metric random matchings with applications\footnote{A preliminary version of this paper
appears in \url{https://arxiv.org/abs/1702.03106}.}}

\author{
Ching-Lueh Chang \footnote{Department of Computer Science and
Engineering,
Yuan Ze University, Taoyuan, Taiwan. Email:
clchang@saturn.yzu.edu.tw}
\footnote{Supported
in part by the Ministry of Science and Technology of Taiwan under
grant 105-2221-E-155-047-.}
}


\maketitle

\begin{abstract}
Let $(\{1,2,\ldots,n\},d)$ be a metric space.
We analyze the expected value and the variance
of
$\sum_{i=1}^{\lfloor n/2\rfloor}\,d(\bs{\pi}(2i-1),\bs{\pi}(2i))$
for a uniformly random permutation $\bs{\pi}$ of $\{1,2,\ldots,n\}$,
leading to the following results:
\begin{itemize}
\item Consider the problem of finding
a point in $\{1,2,\ldots,n\}$ with the minimum
sum of distances
to all points.
We show that
this problem
has a randomized algorithm that
(1)~{\em always} outputs a $(2+\epsilon)$-approximate solution
in
expected $O(n/\epsilon^2)$ time
and that
(2)~inherits Indyk's~\cite{Ind99, Ind00} algorithm
to output a $(1+\epsilon)$-approximate solution in $O(n/\epsilon^2)$ time
with probability $\Omega(1)$,
where $\epsilon\in(0,1)$.
\item The average distance in $(\{1,2,\ldots,n\},d)$
can be approximated in $O(n/\epsilon)$ time
to within a multiplicative factor in $[\,1/2-\epsilon,1\,]$
with probability $1/2+\Omega(1)$,
where
$\epsilon>0$.
\item
Assume $d$ to be a graph metric.
Then the
average distance in $(\{1,2,\ldots,n\},d)$
can be approximated in $O(n)$ time
to within a multiplicative factor in $[\,1-\epsilon,1+\epsilon\,]$
with probability $1/2+\Omega(1)$,
where $\epsilon=\omega(1/n^{1/4})$.
\end{itemize}
\comment{ original abstract 20170324 20:01
Given
an $n$-point
metric
space,
consider the problem of finding
a point with the minimum
sum of distances
to all points.
We show that
this problem
has a randomized algorithm that
(1)~{\em always} outputs a $(2+\epsilon)$-approximate solution
in an expected $O(n/\epsilon^2)$ time
and that
(2)~inherits Indyk's~\cite{Ind99, Ind00} algorithm
to output a $(1+\epsilon)$-approximate solution in $O(n/\epsilon^2)$ time
with probability $\Omega(1)$,
for
all $\epsilon=1/n^{o(1)}$.
\comment{ 
Inheriting Indyk's~\cite{Ind00} algorithm,
our
algorithm
outputs
a $(1+\epsilon)$-approximate
$1$-median
in $O(n/\epsilon^2)$ time
with probability $\Omega(1)$.
}
As a by-product,
we give a Monte-Carlo $O(n)$-time algorithm
that approximates
the average distance of any graph metric to within a multiplicative factor
in $[\,1-\epsilon,1+\epsilon\,]$, for all $\epsilon=1/n^{o(1)}$.
}
\comment{ 
As a by-product,
we present a Monte-Carlo $O(n/\epsilon^2)$-time algorithm for estimating
the sum of all distances in an $n$-point metric space
to within
a ratio in $[\,1-\epsilon,1+\epsilon\,]$, for each constant $\epsilon>0$.
}
\end{abstract}

\section{Introduction}

A metric space is a nonempty set $M$ endowed with a
metric,
i.e.,
a function
$d\colon M\times M\to[\,0,\infty\,)$
such that
\begin{itemize}
\item $d(x,y)=0$ if and only if $x=y$ (identity of indiscernibles),
\item $d(x,y)=d(y,x)$ (symmetry), and
\item $d(x,y)+d(y,z)\ge d(x,z)$ (triangle inequality)
\end{itemize}
for all
$x$, $y$, $z\in M$~\cite{Rud76}.

For all $n\in\mathbb{Z}^+$, define $[n]\equiv\{1,2,\ldots,n\}$.
Given
$n\in\mathbb{Z}^+$
and oracle access to a metric $d\colon [n]\times[n]\to[\,0,\infty\,)$,
{\sc metric $1$-median}
asks for
$\mathop{\mathrm{argmin}}_{y\in[n]}\,\sum_{x\in[n]}\,d(y,x)$,
breaking ties arbitrarily.
It generalizes the classical median selection on the real line
and has a
brute-force
$\Theta(n^2)$-time algorithm.
More generally, {\sc metric $k$-median} asks for
$c_1$, $c_2$, $\ldots$,
$c_k\in[n]$
minimizing
$\sum_{x\in[n]}\,\min_{i=1}^k\,d(x,c_i)$.
Because
$d(\cdot,\cdot)$ defines
$\binom{n}{2}=\Theta(n^2)$ nonzero distances,
only $o(n^2)$-time algorithms are said to
run in
sublinear time~\cite{Ind99}.
For all $\alpha\ge1$,
an
$\alpha$-approximate
$1$-median
is a point $p\in[n]$
satisfying
$$\sum_{x\in[n]}\,d\left(p,x\right)\le\alpha\cdot
\min_{y\in[n]}\,\sum_{x\in[n]}\,d\left(y,x\right).$$

For all $\epsilon>0$,
{\sc metric $1$-median}
has
a Monte Carlo $(1+\epsilon)$-approximation
$O(n/\epsilon^2)$-time
algorithm~\cite{Ind99, Ind00}.
Guha et al.~\cite{GMMMO03}
show that {\sc metric $k$-median}
has
a Monte Carlo, $O(\exp(O(1/\epsilon)))$-approximation,
$O(nk\log n)$-time, $O(n^{\epsilon})$-space and one-pass algorithm for all small $k$
as well as a deterministic, $O(\exp(O(1/\epsilon)))$-approximation,
$O(n^{1+\epsilon})$-time, $O(n^{\epsilon})$-space and one-pass algorithm.
Given
$n$ points in $\mathbb{R}^D$ with $D\ge 1$,
the Monte Carlo algorithms of Kumar et al.~\cite{KSS10}
find
a $(1+\epsilon)$-approximate $1$-median
in
$O(D\cdot\exp(1/\epsilon^{O(1)}))$ time
and a $(1+\epsilon)$-approximate solution to {\sc metric $k$-median}
in $O(Dn\cdot\exp((k/\epsilon)^{O(1)}))$ time.
All randomized $O(1)$-approximation algorithms for {\sc metric $k$-median}
take $\Omega(nk)$ time~\cite{MP04, GMMMO03}.
Chang~\cite{Cha15}
shows that {\sc metric $1$-median} has
a deterministic, $(2h)$-approximation, $O(hn^{1+1/h})$-time
and nonadaptive algorithm for
all
constants
$h\in\mathbb{Z}^+\setminus\{1\}$, generalizing the results of Chang~\cite{Cha13} and
Wu~\cite{Wu14}.
On the other hand,
he
disproves
the
existence
of
deterministic $(2h-\epsilon)$-approximation $O(n^{1+1/(h-1)}/h)$-time
algorithms
for all constants $h\in\mathbb{Z}^+\setminus\{1\}$ and $\epsilon>0$~\cite{Cha16COCOON, Cha17}.

In social network analysis, the closeness centrality of a point $v$
is
the reciprocal of the
average distance
from $v$ to all
points~\cite{WF94}.
So {\sc metric $1$-median}
asks for
a point with the maximum closeness
centrality.
Given oracle access to a graph metric,
the Monte-Carlo algorithms of
Goldreich and Ron~\cite{GR08} and Eppstein and Wang~\cite{EW04}
estimate the closeness centrality of a given point and those of all points, respectively.

All known
sublinear-time
algorithms
for {\sc metric $1$-median}
are
either deterministic or
Monte Carlo,
the latter having
a positive probability of failure.
For example, Indyk's Monte Carlo $(1+\epsilon)$-approximation algorithm
outputs
with a positive probability
a solution
without approximation guarantees.
In contrast,
we show
that {\sc metric $1$-median} has
a randomized
algorithm
that {\em always} outputs a
$(2+\epsilon)$-approximate solution
in
expected $O(n/\epsilon^2)$
time
for all
$\epsilon\in(0,1)$.
So,
excluding
the
known
deterministic algorithms (which are Las Vegas only in the degenerate
sense),
this paper gives
the {\em first} Las Vegas approximation algorithm for {\sc metric $1$-median}
with an expected
sublinear
running
time.
Note that
deterministic
sublinear-time
algorithms for {\sc metric $1$-median}
can be $4$-approximate but not $(4-\epsilon)$-approximate for any constant
$\epsilon>0$~\cite{Cha13, Cha17}.
So our
approximation ratio
of $2+\epsilon$
beats
that of
any
deterministic
sublinear-time
algorithm.
Inheriting
Indyk's algorithm,
our algorithm
outputs a $(1+\epsilon)$-approximate $1$-median in
$O(n/\epsilon^2)$ time with probability $\Omega(1)$ for all
$\epsilon\in(0,1)$.
\comment{ 
In case our algorithm
fails to
output a
$(1+\epsilon)$-approximate $1$-median, it
nonetheless
outputs a $(2+\epsilon)$-approximate
$1$-median.
}

\comment{ 
As a by-product of our derivations,
we present a Monte Carlo $O(n/\epsilon^2)$-time algorithm for estimating $\sum_{u,v\in [n]}\,d(u,v)$
to within an additive error of $\epsilon\cdot\sum_{u,v\in [n]}\,d(u,v)$
with an $\Omega(1)$ probability of success, for each constant $\epsilon>0$.
Previously, the best algorithm for such estimation
needs
$O(n/\epsilon^{7/2})$ time~\cite[Sec.~8]{Ind99}.
}

\comment{ 
So
we have
the first Las Vegas
approximation
algorithm
for {\sc metric $1$-median}
with an expected sublinear running time.
Because
the best approximation ratio
achievable by deterministic sublinear-time algorithms for {\sc metric $1$-median}
is $4$~\cite{Cha13, Cha15},
Las Vegas
approximation
algorithms
have a better approximation ratio.
}

Indyk~\cite{Ind99, Ind00} gives a Monte-Carlo $O(n/\epsilon^{3.5})$-time
algorithm
that approximates the average distance in any
metric space
$([n],d)$
to within a multiplicative factor in $[\,1-\epsilon,1+\epsilon\,]$,
for all
$\epsilon>0$.
Barhum, Goldreich and Shraibman~\cite{BGS07}
improve
Indyk's
time complexity of $O(n/\epsilon^{3.5})$
to
$O(n/\epsilon^2)$.
This paper gives
a Monte-Carlo
$O(n/\epsilon)$-time algorithm
that approximates the average distance
in
$([n],d)$
to within a multiplicative factor in $[\,1/2-\epsilon,1\,]$,
for all
$\epsilon>0$.
For all $\epsilon=\omega(1/n^{1/4})$,
we
present
a Monte-Carlo
$O(n)$-time
algorithm
approximating
the average distance
of any graph metric
to within a multiplicative factor in $[\,1-\epsilon,1+\epsilon\,]$.
But for general metrics, we do not know whether
the
$O(n/\epsilon^2)$
running time
of Barhum, Goldreich and Shraibman
can be improved to $O(n/\epsilon^{2-\Omega(1)})$.

\section{Definitions and preliminaries}

For a metric space $([n],d)$,
\begin{eqnarray}
\bar{r}&\equiv& \frac{1}{n^2}\cdot\sum_{x,y\in[n]}\,d\left(x,y\right),
\label{averagedistance}\\
p^*
&\equiv&
\mathop{\mathrm{argmin}}_{p\in[n]}\,\sum_{x\in [n]}\,
d\left(p,x\right),\label{optimalpoint}
\end{eqnarray}
breaking ties arbitrarily
in equation~(\ref{optimalpoint}).
So
$\bar{r}$ is the average distance in $([n],d)$, and
$p^*$ is a $1$-median.
\comment{ 
For brevity,
$$
B^2\left(x,R\right)\equiv B\left(x,R\right)\times B\left(x,R\right).
$$
The
pairs in $B^2(x,R)$ are ordered.
}

An algorithm $A$ with oracle access to
$d\colon [n]\times[n]\to[\,0,\infty\,)$
is denoted by $A^d$
and
may query $d$ on any $(x,y)\in[n]\times[n]$ for $d(x,y)$.
In this paper, all
Landau symbols (such as $O(\cdot)$, $o(\cdot)$, $\Omega(\cdot)$ and $\omega(\cdot)$)
are w.r.t.\ $n$.
The following
result
is due to Indyk.

\begin{fact}[\cite{Ind99, Ind00}]\label{Indykfact}
For all $\epsilon>0$,
{\sc metric $1$-median} has a Monte Carlo $(1+\epsilon)$-approximation
$O(n/\epsilon^2)$-time algorithm
with
a failure probability of at most
$1/e$.
\end{fact}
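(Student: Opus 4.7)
The plan is to follow Indyk's sampling-plus-exact-evaluation scheme. The algorithm draws $s = \Theta(1/\epsilon^2)$ i.i.d.\ uniformly random indices $Y_1, \ldots, Y_s \in [n]$, computes $\sum_{x \in [n]} d(Y_j,x)$ exactly in $O(n)$ oracle queries for each $j$, and returns any $\hat Y \in \arg\min_j \sum_{x \in [n]} d(Y_j,x)$. The total running time is $O(sn) = O(n/\epsilon^2)$, matching the stated bound.

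For correctness, the starting point is the identity $E_Y\!\left[\sum_{x \in [n]} d(Y,x)\right] = n\bar r$. Applying the triangle inequality inside the double sum yields
\[
\bar r \;=\; \frac{1}{n^2}\sum_{x,y \in [n]} d(x,y) \;\le\; \frac{1}{n^2}\sum_{x,y \in [n]}\bigl(d(x,p^*)+d(p^*,y)\bigr) \;=\; \frac{2}{n}\sum_{x \in [n]} d(p^*,x),
\]
so a single uniform probe is already a $2$-approximation in expectation. To sharpen this to the claimed $(1+\epsilon)$-approximation, I would bound the variance of $\sum_x d(Y,x)$ in terms of $\bigl(\sum_x d(p^*,x)\bigr)^2$ by expanding it as $\sum_{x,x'}\mathrm{Cov}(d(Y,x), d(Y,x'))$ and controlling each covariance through a further application of the triangle inequality; a Paley--Zygmund-type lower-tail estimate would then give $\Pr\!\bigl[\sum_x d(Y,x) \le (1+\epsilon)\sum_x d(p^*,x)\bigr] = \Omega(\epsilon^2)$. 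With $s = \Theta(1/\epsilon^2)$ independent draws, the probability that every $Y_j$ fails this bound is at most $(1-\Omega(\epsilon^2))^s \le 1/e$.

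The main obstacle is precisely this lower-tail step: on pathological metrics (say, a star in which a single central vertex has much smaller sum-of-distances than every other point) only $O(1)$ out of $n$ indices can satisfy the $(1+\epsilon)$ condition, so no purely uniform-sampling bound suffices when $\epsilon$ is bounded away from $1$. Indyk's workaround is an adaptive refinement: the distance profile $\{d(Y_1,x) : x \in [n]\}$ obtained during the first exact evaluation is recycled to bias subsequent probes toward points whose sum-of-distances must, by triangle inequality, lie within an additive $n\cdot d(Y_1, p^*)$ of $\sum_x d(p^*, x)$; feeding this biased sampler into the second-moment argument above then yields the $(1+\epsilon)$-approximation with failure probability at most $1/e$ while keeping the overall query count at $O(n/\epsilon^2)$.
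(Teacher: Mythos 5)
There is nothing in the paper to compare against here: the statement is Fact~\ref{Indykfact}, which the paper imports from \cite{Ind99, Ind00} as a black box and does not prove. Judged on its own, your proposal has a genuine gap, and you have in fact located it yourself in your third paragraph. The algorithm you actually specify --- draw $s=\Theta(1/\epsilon^2)$ uniform candidates $Y_1,\dots,Y_s$, evaluate each exactly, return the best --- cannot give a $(1+\epsilon)$-approximation for $\epsilon$ bounded away from $1$: in the star metric you mention, only the hub is $(1+\epsilon)$-approximate, so any procedure whose output is confined to $O(1/\epsilon^2)$ uniformly chosen candidates fails with probability $1-O(1/(\epsilon^2 n))$. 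For the same reason the key step you defer, namely $\Pr\bigl[\sum_x d(Y,x)\le(1+\epsilon)\sum_x d(p^*,x)\bigr]=\Omega(\epsilon^2)$ for a uniform $Y$, is simply false (that probability is $1/n$ in the star), and no variance bound or Paley--Zygmund argument can establish it. The closing ``adaptive refinement'' is a one-sentence gesture, not an argument: it does not say which points are probed, with what bias, or why the second-moment analysis survives the biasing, and it is not what Indyk's algorithm does.

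The actual construction inverts the roles of sample and candidate set: draw a uniform sample $S$ of size $O(1/\epsilon^2)$ as \emph{reference} points, compute the empirical cost $\sum_{y\in S} d(x,y)$ for \emph{every} $x\in[n]$ (still $O(n/\epsilon^2)$ queries), and return the empirical minimizer. Correctness comes from a pairwise comparison: for candidates $a,b$, the summands $d(a,y)-d(b,y)$ are bounded in magnitude by $d(a,b)\le\bigl(\sum_x d(a,x)+\sum_x d(b,x)\bigr)/n$ via the triangle inequality, so a Chernoff/Markov argument shows that a point whose true cost exceeds $(1+\epsilon)\sum_x d(p^*,x)$ is unlikely to beat $p^*$ empirically. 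This is exactly what lets the hub of the star win even though it is never ``sampled'' as a candidate. Your opening observation that a uniform probe is a $2$-approximation in expectation is correct (it is essentially inequality~(\ref{bestisnoworsethanaverage}) plus Lemma~\ref{inneraverageandoverallaverage} of the paper), but it only yields a constant-factor guarantee, not the $(1+\epsilon)$ of the Fact.
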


Henceforth,
denote Indyk's algorithm in Fact~\ref{Indykfact} by {\sf Indyk median}.
It is given
$n\in\mathbb{Z}^+$,
$\epsilon>0$
and oracle access to
a metric
$d\colon [n]\times[n]\to[\,0,\infty\,)$.
The following fact
on estimating the average distance
is due to Barhum, Goldreich and Shraibman.

\begin{fact}[\cite{BGS07}]\label{averagedistancepriorresult}
Given $n\in\mathbb{Z}^+$, $\epsilon>0$ and oracle access to a metric $d\colon[n]\times[n]
\to[\,0,\infty\,)$,
a real number in
$\left[\,(1-\epsilon)\bar{r},(1+\epsilon)\bar{r}\,\right]$
can be found in $O(n/\epsilon^2)$ time with probability at least $1/2+\Omega(1)$.
\end{fact}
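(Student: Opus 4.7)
The plan is to analyze the most natural sampling estimator. Let $m = \lceil 8n/\epsilon^2 \rceil$ and let $(\bs{u}_1,\bs{v}_1),\ldots,(\bs{u}_m,\bs{v}_m)$ be independent and uniformly distributed in $[n]\times[n]$; the algorithm queries the oracle for each $d(\bs{u}_i,\bs{v}_i)$ and outputs $W \equiv (1/m)\sum_{i=1}^m d(\bs{u}_i,\bs{v}_i)$. This runs in $O(n/\epsilon^2)$ time, and linearity of expectation gives $E[W] = \bar{r}$. The whole problem reduces to bounding $\operatorname{Var}(W) = \operatorname{Var}(d(\bs{u}_1,\bs{v}_1))/m$ by $O(\epsilon^2 \bar{r}^2)$, so that Chebyshev's inequality delivers the claim.

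The main obstacle is that, for a general metric, $\operatorname{Var}(d(\bs{u}_1,\bs{v}_1))$ need \emph{not} be $O(\bar{r}^2)$: a single point that is far from all others (with the other points mutually close) can inflate $E[d(\bs{u}_1,\bs{v}_1)^2]$ to order $n \bar{r}^2$. What rescues the argument is that this inflation is tight, and I would quantify it through the diameter $D \equiv \max_{u,v\in[n]} d(u,v)$. Fix a pair $(u^*, v^*)$ attaining $D$; by the triangle inequality, $D \le d(u^*,w) + d(w,v^*)$ for every $w \in [n]$. Averaging over $w$ and writing $Z_u \equiv (1/n)\sum_{x\in[n]} d(u,x)$ gives $D \le Z_{u^*} + Z_{v^*}$. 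Since $\sum_{u\in[n]} Z_u = n\bar{r}$ and each $Z_u \ge 0$, we have $\max_u Z_u \le n\bar{r}$, and hence $D \le 2n\bar{r}$.

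Given this diameter bound, the variance estimate is immediate: $\operatorname{Var}(d(\bs{u}_1,\bs{v}_1)) \le E[d(\bs{u}_1,\bs{v}_1)^2] \le D \cdot E[d(\bs{u}_1,\bs{v}_1)] \le 2n \bar{r}^2$, so $\operatorname{Var}(W) \le 2n\bar{r}^2/m \le \epsilon^2 \bar{r}^2 / 4$. Chebyshev's inequality then yields $\Pr[\,|W - \bar{r}| \ge \epsilon \bar{r}\,] \le 1/4$, so $W \in [\,(1-\epsilon)\bar{r}, (1+\epsilon)\bar{r}\,]$ with probability at least $3/4 = 1/2 + \Omega(1)$. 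The only non-trivial step is the diameter inequality $D \le 2n\bar{r}$; the remaining calculations are routine second-moment manipulations, and the edge case $n = 1$ (equivalently $\bar{r} = 0$) is trivial since the output must then be $0$.
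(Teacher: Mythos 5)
This statement is Fact~\ref{averagedistancepriorresult}, which the paper imports from Barhum, Goldreich and Shraibman~\cite{BGS07} without proof, so there is no in-paper argument to compare against; what you have written is a self-contained proof of the cited result, and it is correct. Your estimator (average of $m=\lceil 8n/\epsilon^2\rceil$ uniformly sampled pairs), the reduction to a second-moment bound, and the key diameter inequality are all sound: from $\Delta\le d(u^*,w)+d(w,v^*)$ for every $w$ you get $\Delta\le Z_{u^*}+Z_{v^*}\le 2n\bar r$ (in fact summing over $w$ gives the slightly sharper $n\Delta\le\sum_{x,y}d(x,y)=n^2\bar r$, i.e.\ $\Delta\le n\bar r$), whence $\mathrm{var}(d(\bs{u}_1,\bs{v}_1))\le\Delta\bar r\le 2n\bar r^2$ and Chebyshev gives failure probability at most $1/4$. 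This is essentially the same mechanism the present paper uses elsewhere (compare the bound $\delta nr\ge\Delta$ feeding into Lemma~\ref{thehardestparttoboundlemma}, and inequality~(\ref{bestisnoworsethanaverage}) giving $r\le\bar r$), and it matches the standard route to the $O(n/\epsilon^2)$ bound in the cited source: the whole content is that the naive pair-sampling estimator has relative variance $O(n)$ on any metric, which your star-graph remark correctly identifies as tight for this estimator. The only cosmetic points are that your time bound silently assumes each oracle query costs $O(1)$ (as the paper does throughout) and that the $n=1$ degenerate case is, as you note, trivial.
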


\comment{ 
\begin{fact}[Implicit in~{\cite[Theorem~22]{Cha12}}]\label{uniformlyrandompointisgood}
A uniformly random point of $[n]$ is a $4$-approximate $1$-median
with probability at least $1/2$.
\end{fact}
}

\comment{ 
\begin{markov}
[\cite{MR95}]
For a non-negative random variable $X$ and $a>0$,
$$
\Pr
\left|\,
X\ge a
\,\right]\le \frac{\mathop{\mathrm{E}}[X]}{a}.
$$
\end{markov}
}

\begin{chebyshev}
[\cite{MR95}]
Let
$X$
be a random variable
with a finite expected value and a finite
nonzero variance.
Then for all
$k\ge1$,
$$
\Pr\left[\,
\left|\,
X-\mathop{\mathrm E}[X]\,\right|\ge k\sqrt{\mathop{\mathrm{var}}(X)}
\,\right]\le \frac{1}{k^2}.
$$
\end{chebyshev}

\section{Las Vegas approximation
for metric $1$-median
selection}\label{mainresultsection}

\comment{ 
Clearly,
\begin{eqnarray}
\sum_{x\in[n]}\,d\left(p^*,x\right)
\stackrel{\text{(\ref{optimalpoint})}}{\le}
\frac{1}{n}\cdot\sum_{p\in[n]}\,\sum_{x\in[n]}\,d\left(p,x\right)
\stackrel{\text{(\ref{averagedistance})}}{=}
nr.
\end{eqnarray}
}

This section
presents
a randomized algorithm that {\em always} outputs a
$(2+\epsilon)$-approximate $1$-median, where $\epsilon\in(0,1)$.
Clearly,
\begin{eqnarray}
\sum_{x\in[n]}\,d\left(p^*,x\right)
\stackrel{\text{(\ref{optimalpoint})}}{=}
\min_{p\in[n]}\,\sum_{x\in[n]}\,d\left(p,x\right)
\le
\frac{1}{n}\cdot \sum_{p\in[n]}\,\sum_{x\in[n]}\,d\left(p,x\right)
\stackrel{\text{(\ref{averagedistance})}}{=}
n\bar{r}.\label{bestisnoworsethanaverage}
\end{eqnarray}

For each permutation $\pi\colon[n]\to[n]$,
\begin{eqnarray}
\sum_{i=1}^{\lfloor n/2\rfloor}\,
d\left(\pi\left(2i-1\right),\pi\left(2i\right)\right)
\le
\sum_{i=1}^{\lfloor n/2\rfloor}\,
d\left(p^*,\pi\left(2i-1\right)\right)+d\left(p^*,\pi\left(2i\right)\right)
\le \sum_{x\in[n]}\,d\left(p^*,x\right),
\label{matchingsizeandoptimalsumofdistances}
\end{eqnarray}
where the first and the second inequalities follow from
the triangle inequality and
the injectivity of $\pi$.

\begin{lemma}\label{lemmaforapproximationratio}
When line~5 of {\sf Las Vegas median} in Fig.~\ref{lasvegasalgorithm}
is run,
$z$ is
a $(2+\epsilon)$-approximate $1$-median.
\comment{ 
For
all
$y\in[n]$ and
permutations $\pi$, if
$$
\sum_{x\in[n]}\,d\left(y,x\right)
\le \left(2+\epsilon\right)
\sum_{i=1}^{\lfloor n/2\rfloor}\,
d\left(\pi\left(2i-1\right),\pi\left(2i\right)\right),
$$
then $y$ is a $(2+\epsilon)$-approximate $1$-median.
}
\end{lemma}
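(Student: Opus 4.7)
The plan is to combine the deterministic inequality~(\ref{matchingsizeandoptimalsumofdistances}) with whatever acceptance test the algorithm performs before emitting $z$ at line~5. Inequality~(\ref{matchingsizeandoptimalsumofdistances}) says that for every permutation $\pi$, the matching sum $M=\sum_{i=1}^{\lfloor n/2\rfloor}d(\pi(2i-1),\pi(2i))$ is at most the optimal cost $\sum_{x\in[n]}d(p^*,x)$, so any such $M$ computed inside the algorithm is automatically a valid lower bound on the optimum.

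First, I would read off the pseudocode and identify the guard that protects line~5. Because {\sf Las Vegas median} promises to \emph{always} return a $(2+\epsilon)$-approximate median, its output step must be gated by a check of the form $\sum_{x\in[n]}d(z,x)\le(2+\epsilon)\cdot M$, where $M$ is the matching sum built from the random permutation $\pi$ drawn by the algorithm. Making this guard explicit is step one, and it should be essentially a one-liner from the algorithm description in Fig.~\ref{lasvegasalgorithm}.

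Second, I would apply~(\ref{matchingsizeandoptimalsumofdistances}) to that particular $\pi$ to get $M\le\sum_{x\in[n]}d(p^*,x)$, and then chain the two inequalities:
\[
\sum_{x\in[n]}d(z,x)\le(2+\epsilon)\cdot M\le(2+\epsilon)\cdot\sum_{x\in[n]}d(p^*,x),
\]
which is exactly the definition of $z$ being a $(2+\epsilon)$-approximate $1$-median.

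There is essentially no obstacle here: the lemma is a deterministic two-line consequence of the triangle inequality (already exploited in proving~(\ref{matchingsizeandoptimalsumofdistances})) together with the line-5 guard. All probabilistic content -- in particular bounding the expected number of iterations until the guard is satisfied, which is where the $O(n/\epsilon^2)$ expected running time will come from -- is deferred to a separate lemma and plays no role in the approximation-ratio argument itself.
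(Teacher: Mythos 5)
Your proof is correct and matches the paper's own argument: both identify the line-4 guard $\sum_{x\in[n]}d(z,x)\le(2+\epsilon)\sum_{i=1}^{\lfloor n/2\rfloor}d(\bs{\pi}_j(2i-1),\bs{\pi}_j(2i))$ and chain it with inequality~(\ref{matchingsizeandoptimalsumofdistances}) to conclude $\sum_{x\in[n]}d(z,x)\le(2+\epsilon)\sum_{x\in[n]}d(p^*,x)$. No differences worth noting.
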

\begin{proof}
The
condition in line~4 of {\sf Las Vegas median} implies
$$
\sum_{x\in[n]}\,
d\left(z,x\right)
\stackrel{\text{(\ref{matchingsizeandoptimalsumofdistances})}}{\leq}
\left(2+\epsilon\right)
\sum_{x\in[n]}\,
d\left(p^*,x\right)
\stackrel{\text{(\ref{optimalpoint})}}{=}
\left(2+\epsilon\right)
\min_{p\in[n]}\,\sum_{x\in[n]}\,
d\left(p,x\right).
$$
So
when
line~5 is run, it returns a $(2+\epsilon)$-approximate $1$-median.
\end{proof}


\begin{figure}
\begin{algorithmic}[1]
\WHILE{\text{\sf true}}
  \STATE $z\leftarrow\text{\sf Indyk median}^d(n,\epsilon/8)$;
  \STATE Pick independent and uniformly random permutations
$\bs{\pi}_1$, $\bs{\pi}_2$, $\ldots$,
$\bs{\pi}_{80\lceil 1/\epsilon\rceil}\colon[n]\to[n]$;
  \IF{there exists $j\in[\lceil 1/\epsilon\rceil]$
satisfying $\sum_{x\in[n]}\,d(z,x)\le (2+\epsilon)\sum_{i=1}^{\lfloor n/2\rfloor}\,d(\bs{\pi}_j(2i-1),\bs{\pi}_j(2i))$}
    \RETURN $z$;
  \ENDIF
\ENDWHILE
\end{algorithmic}
\caption{Algorithm {\sf Las Vegas median} with oracle access to
a metric $d\colon [n]\times[n]\to[\,0,\infty\,)$ and with inputs $n\in\mathbb{Z}^+$
and $\epsilon\in(0,1)$}
\label{lasvegasalgorithm}
\end{figure}

Inequalities~(\ref{bestisnoworsethanaverage})--(\ref{matchingsizeandoptimalsumofdistances})
yield the following.

\begin{lemma}\label{maximummatchingsize}
For each permutation $\pi\colon[n]\to[n]$,
$$
\sum_{i=1}^{\lfloor n/2\rfloor}\,
d\left(\pi\left(2i-1\right),\pi\left(2i\right)\right)
\le n\bar{r}.
$$
\end{lemma}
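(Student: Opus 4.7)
The plan is to chain the two inequalities the author has already established, namely~(\ref{bestisnoworsethanaverage}) and~(\ref{matchingsizeandoptimalsumofdistances}), since the stated hint after the lemma announces precisely this. Nothing more is needed: inequality~(\ref{matchingsizeandoptimalsumofdistances}) bounds the matching weight by $\sum_{x\in[n]} d(p^*,x)$, and inequality~(\ref{bestisnoworsethanaverage}) bounds that quantity by $n\bar{r}$.

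So, concretely, I would start from an arbitrary permutation $\pi\colon[n]\to[n]$ and invoke~(\ref{matchingsizeandoptimalsumofdistances}) to write
$$
\sum_{i=1}^{\lfloor n/2\rfloor} d(\pi(2i-1),\pi(2i)) \le \sum_{x\in[n]} d(p^*,x).
$$
Then I would apply~(\ref{bestisnoworsethanaverage}), which states that $p^*$, being a $1$-median, has cost at most the average cost $n\bar{r}$ over all candidate points. Composing the two inequalities yields the desired bound.

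There is essentially no obstacle here; both ingredient inequalities have already been justified in the paper (the first via the triangle inequality and the injectivity of $\pi$, the second via averaging over all choices of center). The only thing to watch out for is making clear that the bound holds for \emph{every} permutation, not just a random one, so I would phrase the argument without any probabilistic language and keep $\pi$ a fixed but arbitrary permutation throughout.
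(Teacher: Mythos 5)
Your proposal is correct and matches the paper's own argument exactly: the paper derives this lemma by chaining inequality~(\ref{matchingsizeandoptimalsumofdistances}) with inequality~(\ref{bestisnoworsethanaverage}), precisely as you describe. Your remark that the bound holds deterministically for every fixed permutation is also consistent with how the paper states and uses the lemma.
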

\comment{ 
\begin{proof}
We have
\begin{eqnarray*}
\sum_{i=1}^{\lfloor n/2\rfloor}\,
d\left(\pi\left(2i-1\right),\pi\left(2i\right)\right)
\stackrel{\text{(\ref{optimalpoint})--(\ref{matchingsizeandoptimalsumofdistances})}}{\le}
\frac{1}{n}\cdot\sum_{p\in[n]}\,\sum_{x\in[n]}\,d\left(p,x\right)
\stackrel{\text{(\ref{averagedistance})}}{=}
nr.
\end{eqnarray*}
\end{proof}
}

\begin{lemma}\label{expectedmatchingsize}
For a uniformly random permutation $\bs{\pi}\colon [n]\to[n]$,
$$
\mathop{\mathrm{E}}_{\bs{\pi}}
\left[\,\sum_{i=1}^{\lfloor n/2\rfloor}\,
d\left(\bs{\pi}\left(2i-1\right),\bs{\pi}\left(2i\right)\right)
\right]
=\left\lfloor\frac{n}{2}\right\rfloor\cdot\frac{n\bar{r}}{n-1}.
$$
\end{lemma}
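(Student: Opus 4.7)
The plan is a direct application of linearity of expectation, with the only subtlety being the identification of the marginal distribution of a single matched pair.

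First I would observe that for any fixed $i \in \{1,2,\ldots,\lfloor n/2\rfloor\}$, the pair $(\bs{\pi}(2i-1),\bs{\pi}(2i))$ is, by symmetry of the uniform distribution on permutations, a uniformly random ordered pair of \emph{distinct} elements of $[n]$. There are $n(n-1)$ such pairs, so
\[
\mathop{\mathrm{E}}\bigl[d(\bs{\pi}(2i-1),\bs{\pi}(2i))\bigr] = \frac{1}{n(n-1)} \sum_{\substack{x,y\in[n] \\ x\neq y}} d(x,y).
\]
Next I would use the identity of indiscernibles to drop the $x\neq y$ constraint: since $d(x,x)=0$, the diagonal contributes nothing, and hence $\sum_{x\neq y} d(x,y) = \sum_{x,y\in[n]} d(x,y) = n^2\bar r$ by definition~(\ref{averagedistance}). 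This yields $\mathop{\mathrm{E}}[d(\bs{\pi}(2i-1),\bs{\pi}(2i))] = n\bar r/(n-1)$, independent of $i$.

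Finally I would apply linearity of expectation across the $\lfloor n/2\rfloor$ matched pairs, which is legitimate even though the pairs are not independent, to obtain the claimed value $\lfloor n/2\rfloor \cdot n\bar r/(n-1)$.

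There is essentially no obstacle here: once the marginal distribution of a single pair is identified, the rest is bookkeeping. The only thing to be careful about is not to double-count or to mistakenly treat the pair as unordered; keeping the ordered-pair count $n(n-1)$ consistent with the unrestricted sum $n^2\bar r$ is what produces the factor $n/(n-1)$.
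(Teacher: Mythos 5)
Your proposal is correct and follows essentially the same route as the paper: identify the marginal distribution of a single matched pair, compute its expectation as $\frac{1}{n(n-1)}\sum_{x\neq y}d(x,y)=\frac{n\bar r}{n-1}$ using the identity of indiscernibles, and finish by linearity of expectation. The only cosmetic difference is that you phrase the marginal as a uniformly random ordered pair of distinct elements while the paper speaks of a uniformly random size-$2$ subset; the computation is identical.
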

\begin{proof}
For each $i\in[\lfloor n/2\rfloor]$, $\{\bs{\pi}(2i-1),\bs{\pi}(2i)\}$
is a uniformly random size-$2$ subset of $[n]$,
implying
\begin{eqnarray*}
\mathop{\mathrm{E}}_{\bs{\pi}}
\left[\,
d\left(\bs{\pi}\left(2i-1\right),\bs{\pi}\left(2i\right)\right)
\right]
&=&\frac{1}{n\cdot(n-1)}\cdot\sum_{\text{distinct $x$, $y\in[n]$}}\,
d\left(x,y\right)\\
&=&\frac{1}{n\cdot(n-1)}\cdot\sum_{x,y\in[n]}\,d\left(x,y\right)\\
&\stackrel{\text{(\ref{averagedistance})}}{=}&
\frac{n\bar{r}}{n-1},
\end{eqnarray*}
where the second equality follows from the identity of indiscernibles.
Finally, use the linearity of expectation.
\end{proof}

\comment{ 
\begin{lemma}\label{probabilitythatarandommatchingislarge}
For a uniformly random permutation $\bs{\pi}\colon [n]\to[n]$
and all sufficiently large $n$,
\begin{eqnarray}
\Pr_{\bs{\pi}}
\left[\,\sum_{i=1}^{\lfloor n/2\rfloor}\,
d\left(\bs{\pi}\left(2i-1\right),\bs{\pi}\left(2i\right)\right)
\ge\left(\frac{1}{2}-\frac{\epsilon}{8}\right)nr
\right]
\ge \frac{\epsilon}{8}.
\label{probabilityofasmallmatching}
\end{eqnarray}
\end{lemma}
\begin{proof}
Denote the left-hand side of inequality~(\ref{probabilityofasmallmatching})
by $p$.
Then by Lemma~\ref{maximummatchingsize},
$$
\mathop{\mathrm{E}}_{\bs{\pi}}
\left[\,
\sum_{i=1}^{\lfloor n/2\rfloor}\,
d\left(\bs{\pi}\left(2i-1\right),\bs{\pi}\left(2i\right)\right)
\,\right]
\le
p nr
+
\left(1-p\right)\left(\frac{1}{2}-\frac{\epsilon}{8}\right)nr.
$$
This and Lemma~\ref{expectedmatchingsize} complete the proof.
\end{proof}
}

\begin{lemma}\label{probabilitythatoneoftherandommatchingsislarge}
For all $\epsilon\in(0,1)$ and in
each iteration of the {\bf while} loop of {\sf Las Vegas median},
\begin{eqnarray}
\Pr
\left[\,
\exists j\in\left[80\cdot\left\lceil\frac{1}{\epsilon}\right\rceil\right],\,
\sum_{i=1}^{\lfloor n/2\rfloor}\,
d\left(\bs{\pi}_j\left(2i-1\right),\bs{\pi}_j\left(2i\right)\right)
\ge\left(\frac{1}{2}-\frac{\epsilon}{8}\right)n\bar{r}
\right]
\ge0.9,
\label{probabilityofalargematching}
\end{eqnarray}
where the probability is taken over $\bs{\pi}_1$, $\bs{\pi}_2$,
$\ldots$, $\bs{\pi}_{80\lceil 1/\epsilon\rceil}$ in line~3 of {\sf Las Vegas median}.
\end{lemma}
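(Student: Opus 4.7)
\noindent\textbf{Proof plan for Lemma~\ref{probabilitythatoneoftherandommatchingsislarge}.} The plan is to use a reverse-Markov style argument to lower-bound the success probability of a single random permutation, and then to amplify via independence.

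First I would dispose of the degenerate case $\bar{r}=0$: by the identity of indiscernibles this forces $d\equiv 0$, so every permutation makes the matching sum equal to $0=(1/2-\epsilon/8)n\bar{r}$, and inequality~(\ref{probabilityofalargematching}) holds trivially. Assume henceforth $\bar{r}>0$.

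Next, fix a single uniformly random permutation $\bs{\pi}$ and write $S(\bs{\pi})\equiv\sum_{i=1}^{\lfloor n/2\rfloor} d(\bs{\pi}(2i-1),\bs{\pi}(2i))$. Let
$$
p\equiv\Pr_{\bs{\pi}}\!\left[\,S(\bs{\pi})\ge\left(\tfrac{1}{2}-\tfrac{\epsilon}{8}\right)n\bar{r}\,\right].
$$
I would combine the deterministic upper bound $S(\bs{\pi})\le n\bar{r}$ from Lemma~\ref{maximummatchingsize} with the exact expectation from Lemma~\ref{expectedmatchingsize}, splitting the expectation over the events $\{S(\bs{\pi})\ge(1/2-\epsilon/8)n\bar{r}\}$ and its complement:
$$
\left\lfloor\tfrac{n}{2}\right\rfloor\cdot\tfrac{n\bar{r}}{n-1}
=\mathop{\mathrm{E}}[S(\bs{\pi})]
\le p\cdot n\bar{r}+(1-p)\left(\tfrac{1}{2}-\tfrac{\epsilon}{8}\right)n\bar{r}.
$$
Using $\lfloor n/2\rfloor\cdot n/(n-1)\ge n/2$ (easy check in both parities of $n$) and dividing by $n\bar{r}$ gives $1/2\le p+(1-p)(1/2-\epsilon/8)$, which rearranges to $p\ge\epsilon/(4+\epsilon)\ge \epsilon/5$ since $\epsilon\in(0,1)$.

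Finally, since $\bs{\pi}_1,\ldots,\bs{\pi}_{80\lceil 1/\epsilon\rceil}$ are independent, the probability that every one of them fails is at most
$$
(1-p)^{80\lceil 1/\epsilon\rceil}\le\left(1-\tfrac{\epsilon}{5}\right)^{80/\epsilon}\le e^{-16}<0.1,
$$
which yields~(\ref{probabilityofalargematching}). The only subtlety lies in the reverse-Markov step, where one must squeeze a linear-in-$\epsilon$ lower bound on $p$ out of the small gap between the mean $\approx n\bar{r}/2$ and the deterministic maximum $n\bar{r}$; everything else is routine.
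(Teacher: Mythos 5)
Your proposal is correct and follows essentially the same route as the paper: a reverse-Markov argument combining the deterministic bound of Lemma~\ref{maximummatchingsize} with the exact expectation of Lemma~\ref{expectedmatchingsize} to get a single-permutation success probability linear in $\epsilon$ (the paper derives $\alpha\ge\epsilon/8$, you derive the slightly sharper $p\ge\epsilon/5$), followed by amplification over the independent permutations. Your explicit handling of the degenerate case $\bar{r}=0$ is a small tidiness bonus, not a different argument.
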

\begin{proof}
Let
$\bs{\pi}\colon[n]\to[n]$ be a uniformly random permutation
and
\begin{eqnarray}
\alpha=\Pr_{\bs{\pi}}
\left[\,\sum_{i=1}^{\lfloor n/2\rfloor}\,
d\left(\bs{\pi}\left(2i-1\right),\bs{\pi}\left(2i\right)\right)
\ge\left(\frac{1}{2}-\frac{\epsilon}{8}\right)n\bar{r}
\right].
\end{eqnarray}
So
by
Lemma~\ref{maximummatchingsize},
$$
\mathop{\mathrm{E}}_{\bs{\pi}}
\left[\,
\sum_{i=1}^{\lfloor n/2\rfloor}\,
d\left(\bs{\pi}\left(2i-1\right),\bs{\pi}\left(2i\right)\right)
\,\right]
\le
\alpha n\bar{r}
+
\left(1-\alpha\right)\left(\frac{1}{2}-\frac{\epsilon}{8}\right)
n\bar{r}.
$$
This and Lemma~\ref{expectedmatchingsize}
imply $\alpha\ge \epsilon/8$.
So the left-hand side of inequality~(\ref{probabilityofalargematching})
is at least
$1-(1-\epsilon/8)^{80\lceil 1/\epsilon\rceil}\ge0.9$.
\end{proof}

\comment{ 
\begin{lemma}
For each $y\in[n]$ and a uniformly random permutation $\bs{\pi}$, if
$$
\sum_{x\in[n]}\,d\left(y,x\right)
\le \left(2+\epsilon\right)
\sum_{i=1}^{\lfloor n/2\rfloor}\,
d\left(\bs{\pi}\left(2i-1\right),\bs{\pi}\left(2i\right)\right),
$$
then $y$ is a $(2+\epsilon)$-approximate $1$-median.
\end{lemma}
}
\comment{ 
For all $y\in[n]$,
$$\sum_{x\in[n]}\,d\left(y,x\right)
\ge
\sum_{i=1}^{\lfloor n/2\rfloor}\,
d\left(y,\bs{\pi}\left(2i-1\right)\right)
+d\left(y,\bs{\pi}\left(2i\right)\right)
\ge
\sum_{i=1}^{\lfloor n/2\rfloor}\,
d\left(\bs{\pi}\left(2i-1\right), \bs{\pi}\left(2i\right)\right),
$$
where the first and the second inequalities follow from
the injectivity of $\bs{\pi}$ and the triangle inequality.
So
inequalities~(\ref{qualityofanearoptimalsolution})--(\ref{lowerboundofarandommatchingsize})
imply the approximation ratio of $z$
to be at most
$$
\frac{1+\epsilon/8}{}
$$
}

\comment{ 
\begin{lemma}\label{probabilitythatIndykalgorithmgivesagoodsolution}
In each execution of line~2 of {\sf Las Vegas median},
$$
\Pr\left[\,
\sum_{x\in[n]}\,d\left(z,x\right)
\le \left(1+\frac{\epsilon}{8}\right)
nr
\,\right]
=\Omega(1),
$$
where the probability is taken over the random coin tosses of {\sf Indyk median}.
\end{lemma}
\begin{proof}
By Fact~\ref{Indykfact},
$$
\Pr\left[\,
\sum_{x\in[n]}\,d\left(z,x\right)
\le \left(1+\frac{\epsilon}{8}\right)
\sum_{x\in[n]}\,d\left(p^*,x\right)
\,\right]
=\Omega(1).
$$
This and inequality~(\ref{bestisnoworsethanaverage})
complete the proof.
\end{proof}
}

\begin{lemma}\label{probabilitythatIndykalgorithmisgoodandmatchingislarge}
For all $\epsilon\in(0,1)$ and in
each iteration of the {\bf while} loop of {\sf Las Vegas median},
\begin{eqnarray}
&&\Pr\left[\,
\left(
\sum_{x\in[n]}\,d\left(z,x\right)
\le \left(1+\frac{\epsilon}{8}\right)
\min_{p\in[n]}\,\sum_{x\in[n]}\,d\left(p,x\right)
\right)\right.\nonumber\\
&&\left.
\land\left(
\exists j\in\left[80\cdot\left\lceil\frac{1}{\epsilon}\right\rceil\right],\,
\sum_{i=1}^{\lfloor n/2\rfloor}\,
d\left(\bs{\pi}_j\left(2i-1\right),\bs{\pi}_j\left(2i\right)\right)
\ge \left(\frac{1}{2}-\frac{\epsilon}{8}\right)n\bar{r}
\right)
\right.\nonumber\\
&&\left.
\land
\left(\exists j\in\left[80\cdot\left\lceil\frac{1}{\epsilon}\right\rceil\right],\,
\sum_{x\in[n]}\,d\left(z,x\right)
\le \left(2+\epsilon\right)\sum_{i=1}^{\lfloor n/2\rfloor}\,
d\left(\bs{\pi}_j\left(2i-1\right),\bs{\pi}_j\left(2i\right)\right)
\right)
\,\right]\nonumber\\
&=&\frac{1}{2}+\Omega(1),\label{ultimateequation}
\end{eqnarray}
where the probability is taken over $\bs{\pi}_1$,
$\bs{\pi}_2$, $\ldots$,
$\bs{\pi}_{80\lceil 1/\epsilon\rceil}$
and the random
coin tosses of {\sf Indyk median}.
\end{lemma}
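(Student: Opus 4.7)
The plan is to name the three events inside the probability on the left-hand side of~(\ref{ultimateequation}) as $A$, $B$, and $C$, respectively (so that $A$ says Indyk's output is $(1+\epsilon/8)$-good, $B$ says some $\bs{\pi}_j$ yields a matching of size at least $(1/2-\epsilon/8)n\bar{r}$, and $C$ is the condition checked in line~4), and to prove the bound in two moves: first show that $A\cap B$ already has probability $1/2+\Omega(1)$, and then show that $A\cap B\subseteq C$, which upgrades this into a bound on $\Pr[A\cap B\cap C]$.

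For the first move, I would invoke Fact~\ref{Indykfact} applied with parameter $\epsilon/8$ to get $\Pr[A]\ge 1-1/e$, and invoke Lemma~\ref{probabilitythatoneoftherandommatchingsislarge} to get $\Pr[B]\ge 0.9$. Since $A$ depends only on the internal coin tosses of {\sf Indyk median} in line~2, while $B$ depends only on the independently drawn permutations $\bs{\pi}_1,\ldots,\bs{\pi}_{80\lceil 1/\epsilon\rceil}$ in line~3, the two events are independent, so $\Pr[A\cap B]\ge 0.9(1-1/e)>1/2+\Omega(1)$.

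For the second move (the containment $A\cap B\subseteq C$), I would chain two simple inequalities. Assuming $A$ and inequality~(\ref{bestisnoworsethanaverage}),
\[
\sum_{x\in[n]}d(z,x)\le\left(1+\frac{\epsilon}{8}\right)\min_{p\in[n]}\sum_{x\in[n]}d(p,x)\le\left(1+\frac{\epsilon}{8}\right)n\bar{r}.
\]
Assuming $B$, there is an index $j$ with $\sum_{i=1}^{\lfloor n/2\rfloor}d(\bs{\pi}_j(2i-1),\bs{\pi}_j(2i))\ge (1/2-\epsilon/8)n\bar{r}$. So it suffices to verify the numerical inequality
\[
1+\frac{\epsilon}{8}\ \le\ (2+\epsilon)\left(\frac{1}{2}-\frac{\epsilon}{8}\right),
\]
which follows from expanding the right-hand side to $1+\epsilon/4-\epsilon^2/8$ and using $\epsilon\in(0,1)$ so that $\epsilon^2/8<\epsilon/8$. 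This witnesses event $C$ with the same $j$.

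Combining the two moves yields $\Pr[A\cap B\cap C]=\Pr[A\cap B]\ge 1/2+\Omega(1)$, which is~(\ref{ultimateequation}). The only non-trivial ingredient is checking that the slack $\epsilon/4-\epsilon^2/8$ in the algebraic inequality above is strictly positive on $\epsilon\in(0,1)$; everything else is a bookkeeping exercise that glues Fact~\ref{Indykfact}, Lemma~\ref{probabilitythatoneoftherandommatchingsislarge}, and inequality~(\ref{bestisnoworsethanaverage}) together via the independence of Indyk's coins from the random permutations.
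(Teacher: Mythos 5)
Your proposal is correct and follows essentially the same route as the paper: bound the joint probability of the first two events by $(1-1/e)\cdot 0.9$ using Fact~\ref{Indykfact}, Lemma~\ref{probabilitythatoneoftherandommatchingsislarge} and the independence of Indyk's coins from the permutations, then show the first two events imply the third via inequality~(\ref{bestisnoworsethanaverage}) and the algebraic fact $1+\epsilon/8\le(2+\epsilon)(1/2-\epsilon/8)$. The only nit is your closing sentence: what must be verified is not merely that $\epsilon/4-\epsilon^2/8>0$ but that it is at least $\epsilon/8$, which your earlier computation ($\epsilon^2/8<\epsilon/8$ for $\epsilon\in(0,1)$) already establishes.
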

\begin{proof}
By
Fact~\ref{Indykfact} and line~2 of {\sf Las Vegas median},
the first condition within $\Pr[\cdot]$ in equation~(\ref{ultimateequation})
holds with probability
at least $1-1/e$
over
the
random
coin tosses of
{\sf Indyk median}.
By
Lemma~\ref{probabilitythatoneoftherandommatchingsislarge},
\comment{ 
\begin{eqnarray*}
\sum_{x\in[n]}\,d\left(z,x\right)
&\le& \left(1+\frac{\epsilon}{8}\right)
nr,\\
\exists j\in\left[\frac{1}{\epsilon}\right],\,
\sum_{i=1}^{\lfloor n/2\rfloor}\,
d\left(\bs{\pi}_j\left(2i-1\right),\bs{\pi}_j\left(2i\right)\right)
&\ge& \left(\frac{1}{2}-\frac{\epsilon}{8}\right)nr
\end{eqnarray*}
}
the
second condition holds
with probability
at least $0.9$
over $\bs{\pi}_1$, $\bs{\pi}_2$,
$\ldots$, $\bs{\pi}_{80\lceil 1/\epsilon\rceil}$.
In summary, the first two conditions hold simultaneously
with probability at least $(1-1/e)\cdot 0.9=1/2+\Omega(1)$
(note
that the random coin tosses of {\sf Indyk median}
are independent of $\bs{\pi}_1$, $\bs{\pi}_2$,
$\ldots$, $\bs{\pi}_{80\lceil 1/\epsilon\rceil}$).
Finally, the first two conditions together imply the third
by inequality~(\ref{bestisnoworsethanaverage}) and the easy fact that
$$
\left(1+\frac{\epsilon}{8}\right)\le\left(2+\epsilon\right)
\left(\frac{1}{2}-\frac{\epsilon}{8}\right).
$$
\end{proof}

\comment{ 
\begin{proof}
By Fact~\ref{Indykfact},
$$
\Pr\left[\,
\sum_{x\in[n]}\,d\left(z,x\right)
\le \left(1+\frac{\epsilon}{8}\right)
\sum_{x\in[n]}\,d\left(p^*,x\right)
\,\right]
=\Omega(1),
$$
where the probability is taken over the random coin tosses of {\sf Indyk median}.
This and inequality~(\ref{bestisnoworsethanaverage}) give
$$
\Pr\left[\,
\sum_{x\in[n]}\,d\left(z,x\right)
\le \left(1+\frac{\epsilon}{8}\right)
nr
\,\right]
=\Omega(1),
$$
which together with Lemma~\ref{probabilitythatarandommatchingislarge}
completes the proof (note that $\bs{\pi}$ is independent of the
random coin tosses of {\sf Indyk median}).
\end{proof}
}


\begin{theorem}\label{maintheorem}
For
all
$\epsilon\in(0,1)$,
{\sc metric $1$-median} has a randomized algorithm that
(1)~{\em always} outputs a $(2+\epsilon)$-approximate solution in
expected $O(n/\epsilon^2)$ time and (2)~outputs a $(1+\epsilon)$-approximate
solution in $O(n/\epsilon^2)$ time with probability $\Omega(1)$.
\end{theorem}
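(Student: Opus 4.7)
The plan is to piece together the lemmas proved in this section with an analysis of the iterated structure of {\sf Las Vegas median}. First I would observe that Lemma~\ref{lemmaforapproximationratio} already settles the ``always'' part of claim~(1): whenever line~5 is reached, the returned $z$ is a $(2+\epsilon)$-approximate $1$-median, so correctness of the weaker ratio is deterministic and does not depend on any random event. This is precisely what makes the algorithm Las Vegas rather than Monte Carlo.

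Next, for the expected running time in claim~(1), I would argue two things. One iteration of the {\bf while} loop costs $O(n/\epsilon^2)$ time: the call to {\sf Indyk median}$^d(n,\epsilon/8)$ takes $O(n/\epsilon^2)$ time by Fact~\ref{Indykfact}, evaluating $\sum_{x\in[n]}d(z,x)$ costs $O(n)$ queries, and evaluating $\sum_{i=1}^{\lfloor n/2\rfloor}d(\bs{\pi}_j(2i-1),\bs{\pi}_j(2i))$ across the $80\lceil 1/\epsilon\rceil$ sampled permutations costs $O(n/\epsilon)$ queries in total. The probability that a given iteration terminates is exactly the probability that the condition in line~4 holds, and Lemma~\ref{probabilitythatIndykalgorithmisgoodandmatchingislarge} shows this probability is at least $1/2+\Omega(1)$ (the third conjunct inside that lemma's probability is precisely line~4's test). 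Iterations are mutually independent since each draws fresh random permutations and fresh coin tosses for {\sf Indyk median}, so the number of iterations until termination is stochastically dominated by a geometric random variable with mean $O(1)$, giving expected total time $O(n/\epsilon^2)$.

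For claim~(2), I would simply apply Lemma~\ref{probabilitythatIndykalgorithmisgoodandmatchingislarge} to the \emph{first} iteration of the loop. With probability $1/2+\Omega(1)$, the joint event in that lemma holds; in particular $z$ satisfies $\sum_{x\in[n]}d(z,x)\le(1+\epsilon/8)\min_{p\in[n]}\sum_{x\in[n]}d(p,x)$ and the condition in line~4 fires. On this event, after $O(n/\epsilon^2)$ time the algorithm returns a $(1+\epsilon/8)$-approximate, hence $(1+\epsilon)$-approximate, $1$-median; since $1/2+\Omega(1)=\Omega(1)$, claim~(2) follows.

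The main obstacle, already absorbed into Lemma~\ref{probabilitythatIndykalgorithmisgoodandmatchingislarge}, is that the test in line~4 must play two roles simultaneously: it must be a sound certificate for the $(2+\epsilon)$ guarantee on every execution (handled by Lemma~\ref{lemmaforapproximationratio} via the matching lower bound~(\ref{matchingsizeandoptimalsumofdistances})), and it must fire often enough that the loop terminates quickly while inheriting Indyk's $(1+\epsilon)$ approximation with noticeable probability. With both lemmas in hand, the theorem reduces to the routine geometric-stopping-time bookkeeping sketched above.
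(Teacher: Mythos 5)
Your proposal is correct and follows essentially the same route as the paper: Lemma~\ref{lemmaforapproximationratio} for the unconditional $(2+\epsilon)$ guarantee, Lemma~\ref{probabilitythatIndykalgorithmisgoodandmatchingislarge} for a per-iteration success probability of $1/2+\Omega(1)$ (hence $O(1)$ expected iterations and the first-iteration argument for claim~(2)), and the same per-iteration cost accounting. The only detail the paper adds that you omit is a brute-force $O(n^2)$ fallback after $n^2$ computation steps, which guarantees termination on the measure-zero event that the loop never fires and thereby makes the word ``always'' literally true.
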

\begin{proof}
By
Lemma~\ref{probabilitythatIndykalgorithmisgoodandmatchingislarge},
each execution of lines~4--5 of {\sf Las Vegas median} returns with probability
$1/2+\Omega(1)$.
So the expected number of iterations is $O(1)$.
By Fact~\ref{Indykfact},
line~2 takes $O(n/\epsilon^2)$ time.
Line~3 takes $80\lceil 1/\epsilon\rceil\cdot O(n)$ time by the Knuth shuffle.
Clearly,
lines~4--5 take $O(n/\epsilon)$ time.
In summary, the expected running time of {\sf Las Vegas median} is
$O(1)\cdot O(n/\epsilon^2)=O(n/\epsilon^2)$.
To prevent {\sf Las Vegas median} from running forever, find a $1$-median
by brute force (which obviously takes $O(n^2)$ time) after $n^2$ steps of
computation.
By Lemma~\ref{lemmaforapproximationratio}, {\sf Las Vegas median}
is $(2+\epsilon)$-approximate.

By Lemma~\ref{probabilitythatIndykalgorithmisgoodandmatchingislarge},
$z$ is $(1+\epsilon/8)$-approximate and is also returned in line~5
with probability $\Omega(1)$ in
the first
(in fact, any)
iteration.
Finally, the previous paragraph has
shown each
iteration to take $O(n/\epsilon^2)$ time.
\end{proof}

\comment{ 
\begin{eqnarray}
\sum_{x\in[n]}\,d\left(z,x\right)
&\le& \left(1+\frac{\epsilon}{8}\right)
nr,\label{qualityofanearoptimalsolution}\\
\sum_{i=1}^{\lfloor n/2\rfloor}\,
d\left(\bs{\pi}\left(2i-1\right),\bs{\pi}\left(2i\right)\right)
&\ge& \left(\frac{1}{2}-\frac{\epsilon}{8}\right)nr,
\label{lowerboundofarandommatchingsize}
\end{eqnarray}
}

By
Fact~\ref{Indykfact},
{\sf Indyk median} satisfies condition~(2)
in Theorem~\ref{maintheorem}.
But it does not satisfy condition~(1).

We
now
justify
the optimality of
the ratio of $2+\epsilon$ in
Theorem~\ref{maintheorem}.
%
%
%
Let
$A$ be a randomized algorithm
that
always
outputs
a
$(2-\epsilon)$-approximate
$1$-median.
Furthermore, denote by $p\in [n]$ (resp., $Q\subseteq [n]\times [n]$)
the output (resp., the set of queries as unordered pairs)
of $A^{d_1}(n)$, where $d_1$ is the discrete metric (i.e.,
$d_1(x,y)=1$ and $d_1(x,x)=0$ for all distinct $x$, $y\in [n]$).
Without loss of generality, assume $(p,y)\in Q$ for all $y\in [n]\setminus\{p\}$ by adding dummy
queries.
So
the queries in $Q$ witness
that
\begin{eqnarray}
\sum_{y\in [n]\setminus\{p\}}\,d_1\left(p,y\right)=n-1.
\label{outputunderthediscretemetric}
\end{eqnarray}
Assume
without loss of generality
that $A$ never queries for the distance from a point to itself.

In the sequel,
consider the case
that $|Q|<\epsilon\cdot(n-1)^2/8$.
By
the averaging argument, there exists a point $\hat{p}\in [n]\setminus \{p\}$
involved in at most $2\cdot|Q|/(n-1)$ queries in $Q$ (note that each
query involves two points).
Because
every
function $f\colon [n]\times[n]\to[\,0,\infty\,)$
with
$$\left\{f\left(x,y\right)\mid \left(x,y\in[n]\right)\land\left(
x\neq y\right)\right\}\subseteq\left\{\frac{1}{2},1\right\}$$
satisfies the triangle inequality,
$A$ cannot exclude the possibility that $d_1(\hat{p},y)=1/2$ for all $y\in[n]\setminus\{\hat{p}\}$
satisfying $(\hat{p},y)\notin Q$.
In summary,
$A$ cannot rule out the case that
\begin{eqnarray}
\sum_{y\in[n]}\,d_1\left(\hat{p},y\right)&\le& \frac{2\cdot |Q|}{n-1}\cdot 1
+\left(n-1-\frac{2\cdot |Q|}{n-1}\right)\cdot \frac{1}{2}
< \left(\frac{1}{2}+\frac{\epsilon}{8}\right)\cdot(n-1).\,\,\,\,\,
\label{acasethatcannotberuledout}
\end{eqnarray}
Equations~(\ref{outputunderthediscretemetric})--(\ref{acasethatcannotberuledout})
contradict
the guarantee that $p$ is $(2-\epsilon)$-approximate.
Consequently, the case that $|Q|<\epsilon\cdot(n-1)^2/8$ should {\em never}
happen.
The next theorem summarizes the above.

\begin{theorem}
{\sc Metric $1$-median}
has no
randomized algorithm
that always outputs a $(2-\epsilon)$-approximate
solution and that makes
fewer than
$\epsilon\cdot (n-1)^2/8$ queries
with a positive probability
given oracle access to the discrete metric,
for any
constant
$\epsilon\in(0,1)$.
\end{theorem}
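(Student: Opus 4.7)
The plan is to prove this lower bound via an adversary argument against the discrete metric $d_1$. I would suppose for contradiction that a randomized algorithm $A$ always outputs a $(2-\epsilon)$-approximate $1$-median on every input and, on input $d_1$, makes fewer than $\epsilon\cdot(n-1)^2/8$ queries with positive probability. Fix an outcome of $A$'s coin tosses witnessing this event, and let $p\in[n]$ be the resulting output and $Q\subseteq[n]\times[n]$ the (unordered) set of queries, so $|Q|<\epsilon(n-1)^2/8$. Without loss of generality, $A$ queries $(p,y)$ for every $y\in[n]\setminus\{p\}$, which can be arranged by appending dummy queries; this at least certifies the observable value $\sum_{y\in[n]\setminus\{p\}}d_1(p,y)=n-1$.

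The crux is to construct a second metric $d_2$ that is indistinguishable from $d_1$ on the queries in $Q$ but on which $p$ is no longer $(2-\epsilon)$-approximate. By averaging, some $\hat{p}\in[n]\setminus\{p\}$ appears in at most $2|Q|/(n-1)$ of the pairs in $Q$. I would define $d_2$ by setting $d_2(\hat{p},y)=1/2$ for every $y\neq\hat{p}$ with $(\hat{p},y)\notin Q$, and $d_2(x,y)=d_1(x,y)$ otherwise. Because $A$'s answers on $d_2$ agree with its answers on $d_1$ throughout $Q$, the same realization of coin tosses forces $A$ to output the same $p$ on $d_2$.

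The one place that warrants a separate remark is verifying that $d_2$ is a legitimate metric: its off-diagonal values lie in $\{1/2,1\}$ and its diagonal is zero, so for any distinct $x,y,z$ we have $d_2(x,y)+d_2(y,z)\ge 1\ge d_2(x,z)$, giving the triangle inequality for free. This is the only structural step; everything else is arithmetic.

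Finally I would derive the contradiction by comparing two sums on $d_2$. The padded queries force $\sum_{y\in[n]\setminus\{p\}}d_2(p,y)=n-1$, while
$\sum_{y\in[n]\setminus\{\hat{p}\}}d_2(\hat{p},y)\le (2|Q|/(n-1))\cdot 1+\bigl(n-1-2|Q|/(n-1)\bigr)\cdot(1/2)<\bigl(1/2+\epsilon/8\bigr)(n-1)$, where the last inequality uses $|Q|<\epsilon(n-1)^2/8$. A short arithmetic check shows $(n-1)/\bigl((1/2+\epsilon/8)(n-1)\bigr)>2-\epsilon$ for $\epsilon\in(0,1)$, contradicting the guarantee that $p$ is a $(2-\epsilon)$-approximate $1$-median under $d_2$. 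Hence the event $|Q|<\epsilon(n-1)^2/8$ has probability zero, which is precisely the theorem. The main obstacle, to the extent there is one, is cleanly packaging the indistinguishability between $d_1$ and $d_2$ and the metric property of $d_2$; the latter is trivial thanks to the $\{0,1/2,1\}$-valuedness.
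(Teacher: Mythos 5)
Your proposal is correct and follows essentially the same route as the paper: the same averaging argument to find a point $\hat{p}$ in few queries, the same observation that any $\{1/2,1\}$-valued off-diagonal function is a metric, and the same arithmetic contradiction with the $(2-\epsilon)$ guarantee. The only cosmetic difference is that you explicitly build the alternative metric $d_2$ and argue indistinguishability, where the paper phrases this as ``$A$ cannot exclude the possibility that $d_1(\hat{p},y)=1/2$''; these are the same argument.
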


Lemmas~\ref{maximummatchingsize}~and~\ref{probabilitythatoneoftherandommatchingsislarge}
yield the following estimation of the average distance.

\begin{theorem}\label{theoremforaveragedistance}
Given
$n\in\mathbb{Z}^+$, $\epsilon>0$ and
oracle access to a metric $d\colon[n]\times[n]\to[\,0,\infty\,)$,
a real number in $[\,(1/2-\epsilon)\bar{r},\bar{r}\,]$
can be found
in $O(n/\epsilon)$ time
with probability $1/2+\Omega(1)$.
\end{theorem}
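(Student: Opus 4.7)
The plan is to recycle the matching-based estimator already embedded in \textsf{Las Vegas median}, isolating only the steps that bound $\bar{r}$. Given $n$, $\epsilon$ and oracle access to $d$, the algorithm draws $80\lceil 1/\epsilon\rceil$ independent uniformly random permutations $\bs{\pi}_1,\ldots,\bs{\pi}_{80\lceil 1/\epsilon\rceil}$ of $[n]$ (via the Knuth shuffle), forms each matching sum
$$M_j \,\equiv\, \sum_{i=1}^{\lfloor n/2\rfloor} d\left(\bs{\pi}_j(2i-1),\bs{\pi}_j(2i)\right),$$
and outputs $M^\star/n$, where $M^\star\equiv\max_j M_j$.

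For correctness, the upper bound is purely deterministic: Lemma~\ref{maximummatchingsize} gives $M_j\le n\bar{r}$ for every realization of $\bs{\pi}_j$, hence $M^\star/n\le\bar{r}$ with probability one. For the lower bound, Lemma~\ref{probabilitythatoneoftherandommatchingsislarge} states that with probability at least $0.9$ over the choice of the permutations, some $M_j\ge (1/2-\epsilon/8)\,n\bar{r}$, so that $M^\star\ge (1/2-\epsilon/8)\,n\bar{r}\ge (1/2-\epsilon)\,n\bar{r}$. Combining the two bounds, the output lies in $[(1/2-\epsilon)\bar{r},\,\bar{r}]$ with probability at least $0.9=1/2+\Omega(1)$, which is exactly the guarantee claimed.

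For running time, each permutation is sampled in $O(n)$ time via the Knuth shuffle, each $M_j$ is computed with $\lfloor n/2\rfloor$ oracle calls in $O(n)$ time, and the final maximum is over $O(1/\epsilon)$ numbers. Summing over the $80\lceil 1/\epsilon\rceil$ rounds yields $O(n/\epsilon)$ overall.

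There is no real obstacle here beyond packaging: the deterministic upper bound and the probabilistic lower bound were essentially separated out in Lemmas~\ref{maximummatchingsize} and \ref{probabilitythatoneoftherandommatchingsislarge} precisely to make this corollary immediate, and the only ``design'' choice is to report the maximum (rather than, say, the average) of the $M_j$'s so that the deterministic upper bound of Lemma~\ref{maximummatchingsize} is preserved while the probabilistic lower bound of Lemma~\ref{probabilitythatoneoftherandommatchingsislarge} is achieved by at least one $j$.
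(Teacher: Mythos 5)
Your proposal is correct and is essentially identical to the paper's own proof: the paper also outputs $\frac{1}{n}\max_{j\in[80\lceil 1/\epsilon\rceil]}\sum_{i=1}^{\lfloor n/2\rfloor}d(\bs{\pi}_j(2i-1),\bs{\pi}_j(2i))$, invokes Lemma~\ref{maximummatchingsize} for the deterministic upper bound and Lemma~\ref{probabilitythatoneoftherandommatchingsislarge} for the probabilistic lower bound, and accounts for the $O(n/\epsilon)$ time via the Knuth shuffle. No substantive differences.
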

\begin{proof}
By
Lemmas~\ref{maximummatchingsize}~and~\ref{probabilitythatoneoftherandommatchingsislarge},
\begin{eqnarray}
\frac{1}{n}\cdot
\max_{j\in[80\cdot\lceil 1/\epsilon\rceil]}\,
\sum_{i=1}^{\lfloor n/2\rfloor}\,d\left(\bs{\pi}_j\left(2i-1\right),
\bs{\pi}_j\left(2i\right)\right)
\in\left[\,\left(\frac{1}{2}-\frac{\epsilon}{8}\right)\bar{r},
\bar{r}\,\right]
\label{rangeofestimationofaveragedistance}
\end{eqnarray}
with probability $1/2+\Omega(1)$.
The Knuth shuffle picks
$\bs{\pi}_1$, $\bs{\pi}_2$, $\ldots$, $\bs{\pi}_{80\lceil 1/\epsilon\rceil}$
in $80\lceil 1/\epsilon\rceil\cdot O(n)$
time.
Then
the left-hand side of
relation~(\ref{rangeofestimationofaveragedistance})
can be calculated in $O(n/\epsilon)$ time.
\end{proof}

Note that the estimation
of
the average distance
in
Theorem~\ref{theoremforaveragedistance}
has only
one-sided error.
The time complexity (resp., approximation ratio) in
Theorem~\ref{theoremforaveragedistance}
is better (resp., worse) than that in Fact~\ref{averagedistancepriorresult}.

\comment{ 
But we do not know whether the time complexity in
Theorem~\ref{theoremforaveragedistance}
and the approximation ratio in Fact~\ref{averagedistancepriorresult}
can be achieved simultaneously.
}

\section{Estimating the average distance of a graph
metric}\label{approximationratiosection}

Throughout
this
section,
take
any
$\epsilon=\omega(1/n^{1/4})$
less than a small constant,
e.g., $\epsilon=10^{-100}$.
Define
\begin{eqnarray}
\delta&\equiv& \frac{\epsilon^2}{10^{10}},\label{thedeltavalue}\\
r&\equiv&\frac{1}{n}\cdot\sum_{x\in[n]}\,d\left(p^*,x\right),
\label{theaveragedistancefromanoptimalpoint}
\end{eqnarray}
where $p^*$ is as in equation~(\ref{optimalpoint}).
As $\epsilon=\omega(1/n^{1/4})$,
$\delta=\omega(1/\sqrt{n})$ by equation~(\ref{thedeltavalue}).
\comment{ 
By line~1 of
{\sf average distance}
in Fig.~\ref{mainalgorithm},
$\delta>0$
is likewise
small,
and $\delta=1/n^{o(1)}$.
Furthermore, take $\bs{u}$, $r$ and $\bs{\pi}$
as in lines~2--4 of {\sf average distance}.
}

\comment{ 
The following lemma
implies
that
$z$ in line~3 of {\sf Las Vegas median}
is a solution (to {\sc metric $1$-median}) no worse than
those
in
$[n]\setminus B(z,8r)$, where $r$ is as in line~4.
}


\comment{ 
This section analyzes
the probability of running line~7
in
any
particular
iteration of the {\bf while} loop
of {\sf Las Vegas median}.
The following lemma
uses an easy averaging argument.
}

\comment{ 
Assume
$n$ to be sufficiently large
so that
$|B(z,\delta nr)|\ge 4$ by Lemma~\ref{thesmallradiusballislarge}.
}

\comment{ 
Define
\begin{eqnarray}
r'
\equiv
\frac{1}{|B(z,\delta nr)|^2}
\cdot
\sum_{u, v\in B(z,\delta nr)}\,
d\left(u,v\right)
\label{smallerballaveragedefinition}
\end{eqnarray}
to
be the average distance
in $B(z,\delta nr)$.
}

\comment{ 
As $z\in B(z,\delta nr)$, the
denominator in the right-hand side of
equation~(\ref{smallerballaveragedefinition})
is nonzero.
}

\begin{lemma}\label{inneraverageandoverallaverage}
$\bar{r}\leq 2r$.
\end{lemma}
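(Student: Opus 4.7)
The plan is to bound the pairwise sum defining $\bar{r}$ by routing every pair through the $1$-median $p^*$ via the triangle inequality, then recognize the resulting double sum as $2n\sum_{x\in[n]}d(p^*,x)$.

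Concretely, I would start from the definition
\[
\bar{r}=\frac{1}{n^2}\sum_{x,y\in[n]}d(x,y)
\]
and apply the triangle inequality pointwise: for each pair $x,y\in[n]$, $d(x,y)\le d(x,p^*)+d(p^*,y)$. Summing over all $(x,y)\in[n]\times[n]$ gives
\[
\sum_{x,y\in[n]}d(x,y)\;\le\;\sum_{x,y\in[n]}d(x,p^*)\;+\;\sum_{x,y\in[n]}d(p^*,y)\;=\;2n\sum_{x\in[n]}d(p^*,x),
\]
where the last equality uses that each of the two double sums has a summand depending only on one of the two indices, so the other index contributes a factor of $n$, and also the symmetry $d(x,p^*)=d(p^*,x)$.

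Dividing by $n^2$ and recalling the definition of $r$ in equation~(\ref{theaveragedistancefromanoptimalpoint}),
\[
\bar{r}\;\le\;\frac{2}{n}\sum_{x\in[n]}d(p^*,x)\;=\;2r,
\]
which is the desired inequality. There is no real obstacle here; the only thing to be slightly careful about is the bookkeeping of the double sum (making sure both the $d(x,p^*)$ and $d(p^*,y)$ terms collapse correctly to $n\sum_x d(p^*,x)$ after using symmetry), but this is routine.
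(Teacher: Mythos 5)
Your proof is correct and is essentially identical to the paper's: both route each pair $(x,y)$ through $p^*$ via the triangle inequality, collapse the two double sums to $2n\sum_{x\in[n]}d(p^*,x)$, and divide by $n^2$ to obtain $\bar{r}\le 2r$ from equation~(\ref{theaveragedistancefromanoptimalpoint}). No issues.
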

\begin{proof}
By
equation~(\ref{averagedistance}) and
the triangle inequality,
\begin{eqnarray}
\bar{r}
&\le&
\frac{1}{n^2}
\cdot
\sum_{x, y\in [n]}\,
\left(d\left(p^*,x\right)+d\left(p^*,y\right)\right)
\nonumber\\
&=&
\frac{1}{n^2}
\cdot
n\cdot\left(
\sum_{x\in [n]}\,
d\left(p^*,x\right)
+\sum_{y\in [n]}\, d\left(p^*,y\right)
\right)\nonumber\\
&=&
\frac{2}{n}
\cdot
\sum_{x\in [n]}\,
d\left(p^*,x\right).
\label{needanumberhere}
\end{eqnarray}
Equations~(\ref{theaveragedistancefromanoptimalpoint})--(\ref{needanumberhere})
complete the proof.
\end{proof}

\begin{figure}
\begin{algorithmic}[1]
\STATE Pick a
uniformly random
permutation $\bs{\pi}\colon [n]\to [n]$;
\RETURN $\sum_{i=1}^{\lfloor n/2\rfloor}\,d(\bs{\pi}(2i-1),\bs{\pi}(2i))\cdot 2/n$;
\end{algorithmic}
\caption{Algorithm {\sf average distance} with oracle access to a metric
$d\colon [n]\times[n]\to[\,0,\infty\,)$
and with inputs $n\in\mathbb{Z}^+$
and
$\epsilon=\omega(1/n^{1/4})$.}
\label{mainalgorithm}
\end{figure}

As in line~1 of {\sf average distance} in Fig.~\ref{mainalgorithm},
let $\bs{\pi}\colon[n]\to[n]$ be a uniformly random permutation.
Clearly,
\begin{eqnarray}
&&\mathop{\mathrm E}_{\bs{\pi}}\left[\,
\left(\sum_{i=1}^{\lfloor n/2\rfloor}\,
d\left(\bs{\pi}\left(2i-1\right),\bs{\pi}\left(2i\right)\right)\right)^2
\,\right]
\label{startofthemeanofsquare}\\
&=&
\mathop{\mathrm E}_{\bs{\pi}}\left[\,
\sum_{i=1}^{\lfloor n/2\rfloor}\,
d\left(\bs{\pi}\left(2i-1\right),\bs{\pi}\left(2i\right)\right)
\cdot \sum_{j=1}^{\lfloor n/2\rfloor}\,
d\left(\bs{\pi}\left(2j-1\right),\bs{\pi}\left(2j\right)\right)
\,\right]\nonumber\\
&=&
\sum_{\text{distinct $i,j=1$}}^{\lfloor n/2\rfloor}\,
\mathop{\mathrm E}_{\bs{\pi}}\left[\,
d\left(\bs{\pi}\left(2i-1\right),\bs{\pi}\left(2i\right)\right)
\cdot d\left(\bs{\pi}\left(2j-1\right),\bs{\pi}\left(2j\right)\right)
\,\right]\nonumber\\
&+&
\sum_{i=1}^{\lfloor n/2\rfloor}\,
\mathop{\mathrm E}_{\bs{\pi}}\left[\,
d^2\left(\bs{\pi}\left(2i-1\right),\bs{\pi}\left(2i\right)\right)
\,\right],\label{endofthemeanofsquare}
\end{eqnarray}
where the last equality follows from the linearity of expectation
and the separation of pairs $(i,j)$
according to whether $i=j$.
The next three lemmas analyze the
variance
of
$$
\sum_{i=1}^{\lfloor n/2\rfloor}\,
d\left(\bs{\pi}\left(2i-1\right),\bs{\pi}\left(2i\right)\right).
$$

\begin{lemma}\label{distinctdistancesproduct}
{\small 
\begin{eqnarray*}
\sum_{\text{\rm distinct $i,j=1$}}^{\lfloor n/2\rfloor}\,
\mathop{\mathrm E}_{\bs{\pi}}\left[\,
d\left(\bs{\pi}\left(2i-1\right),\bs{\pi}\left(2i\right)\right)
\cdot d\left(\bs{\pi}\left(2j-1\right),\bs{\pi}\left(2j\right)\right)
\,\right]
\le\frac{1}{4}\cdot \left(1
+
O\left(\frac{1}{n}\right)
\right)n^2
\bar{r}^2.
\end{eqnarray*}
}
\end{lemma}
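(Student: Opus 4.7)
The plan is to exploit the fact that for distinct $i,j\in[\lfloor n/2\rfloor]$, the 4-tuple $(\bs{\pi}(2i-1),\bs{\pi}(2i),\bs{\pi}(2j-1),\bs{\pi}(2j))$ is a uniformly random ordered tuple of four distinct elements of $[n]$, and then to bound the resulting sum over distinct 4-tuples by the unrestricted sum, which factorizes.

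In detail, I would first observe that for each pair of distinct $i,j$,
$$
\mathop{\mathrm E}_{\bs{\pi}}\left[d\bigl(\bs{\pi}(2i-1),\bs{\pi}(2i)\bigr)\cdot d\bigl(\bs{\pi}(2j-1),\bs{\pi}(2j)\bigr)\right]
= \frac{1}{n(n-1)(n-2)(n-3)}\sum_{\substack{a,b,c,d\in[n]\\\text{distinct}}} d(a,b)\,d(c,d),
$$
because the map from permutations to ordered 4-tuples of distinct points is balanced. Next, since $d\ge 0$ and $d(x,x)=0$, dropping the distinctness constraint only enlarges the sum, so
$$
\sum_{\substack{a,b,c,d\in[n]\\\text{distinct}}} d(a,b)\,d(c,d)
\le \sum_{a,b,c,d\in[n]} d(a,b)\,d(c,d)
= \left(\sum_{a,b\in[n]} d(a,b)\right)^{\!2}
\stackrel{\text{(\ref{averagedistance})}}{=} n^4\bar{r}^2.
$$

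Finally, I would sum over the $\lfloor n/2\rfloor(\lfloor n/2\rfloor-1)\le n^2/4$ ordered pairs $(i,j)$ with $i\ne j$, yielding
$$
\sum_{\text{distinct }i,j=1}^{\lfloor n/2\rfloor}\mathop{\mathrm E}_{\bs{\pi}}\!\left[\,\cdots\,\right]
\le \frac{n^2}{4}\cdot\frac{n^4\bar{r}^2}{n(n-1)(n-2)(n-3)}
= \frac{n^2\bar{r}^2}{4}\cdot\frac{n^3}{(n-1)(n-2)(n-3)},
$$
and observing that $n^3/[(n-1)(n-2)(n-3)]=1+O(1/n)$ delivers the claimed bound.

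There is really no substantive obstacle: the argument is a routine second-moment computation. The only point requiring mild care is the combinatorial bookkeeping, namely correctly identifying the uniform distribution on distinct 4-tuples induced by the random permutation and verifying that the ratio $\lfloor n/2\rfloor(\lfloor n/2\rfloor-1)\cdot n^4 / [n(n-1)(n-2)(n-3)]$ collapses to $(n^2/4)(1+O(1/n))$. The crucial structural insight that makes the bound tight up to lower-order terms is that relaxing ``distinct'' to ``unrestricted'' on the 4-tuple index set allows the double sum to factorize as $(n^2\bar{r})^2$, at the cost of only an $O(1/n)$ relative error.
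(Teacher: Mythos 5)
Your proof is correct and follows essentially the same route as the paper's: reduce each cross term to the uniform distribution over distinct 4-tuples, drop the distinctness constraint so the double sum factorizes into $(n^2\bar{r})^2$, and absorb the counting factors into $1+O(1/n)$. The only cosmetic difference is that you bound $\lfloor n/2\rfloor(\lfloor n/2\rfloor-1)$ by $n^2/4$ up front, while the paper carries the exact product and folds it into the $O(1/n)$ term at the end.
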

\begin{proof}
Pick any distinct $i$, $j\in[\,\lfloor n/2\rfloor\,]$.
Clearly,
$$\left\{\bs{\pi}\left(2i-1\right), \bs{\pi}\left(2i\right),
\bs{\pi}\left(2j-1\right),
\bs{\pi}\left(2j\right)\right\}$$
is a uniformly random size-$4$ subset of $[n]$.
So
\begin{eqnarray*}
&&\mathop{\mathrm E}_{\bs{\pi}}\left[\,
d\left(\bs{\pi}(2i-1),\bs{\pi}(2i)\right)
\cdot d\left(\bs{\pi}(2j-1),\bs{\pi}(2j)\right)
\,\right]\\
&=&
\frac{1}{n
\cdot(n-1)\cdot(n-2)
\cdot(n-3)}
\cdot
\sum_{\text{distinct $u$, $v$, $x$, $y\in [n]$}}\,
d\left(u,v\right)\cdot d\left(x,y\right).
\end{eqnarray*}

Clearly,
\begin{eqnarray*}
\sum_{\text{distinct $u$, $v$, $x$, $y\in [n]$}}\,
d\left(u,v\right)\cdot d\left(x,y\right)
&\le&
\sum_{u, v, x, y\in [n]}\,
d\left(u,v\right)\cdot d\left(x,y\right)\\
&=&
\sum_{u, v\in [n]}\,
d\left(u,v\right)
\cdot
\sum_{x, y\in [n]}\,
d\left(x,y\right)\\
&=&
\left(\sum_{x, y\in [n]}\,
d\left(x,y\right)
\right)^2.
\end{eqnarray*}
In summary,
\begin{eqnarray*}
&&
\sum_{\text{\rm distinct $i,j=1$}}^{\lfloor n/2\rfloor}\,
\mathop{\mathrm E}_{\bs{\pi}}\left[\,
d\left(\bs{\pi}\left(2i-1\right),\bs{\pi}\left(2i\right)\right)
\cdot d\left(\bs{\pi}\left(2j-1\right),\bs{\pi}\left(2j\right)\right)
\,\right]\\
&\le&
\left\lfloor\frac{n}{2}\right\rfloor
\left(\left\lfloor\frac{n}{2}\right\rfloor-1\right)
\cdot
\frac{1}{n
\cdot(n-1)\cdot(n-2)
\cdot(n-3)}
\cdot
\left(
\sum_{x, y\in [n]}\,
d\left(x,y\right)
\right)^2.
\end{eqnarray*}
This
and equation~(\ref{averagedistance})
complete
the proof.
\end{proof}

\comment{ 
Lemmas~\ref{thesmallradiusballislarge}~and~\ref{distinctdistancesproduct}
and equation~(\ref{smallerballaveragedefinition})
yield the following.

\begin{lemma}
\begin{eqnarray*}
\sum_{\text{\rm distinct $i,j=1$}}^{|B(z,\delta nr)|/2}\,
\mathop{\mathrm E}\left[\,
d\left(\pi(2i-1),\pi(2i)\right)
\cdot d\left(\pi(2j-1),\pi(2j)\right)
\,\right]
=\frac{1}{4}\cdot n^2\left(r'\right)^2\left(1\pm o(1)\right).
\end{eqnarray*}
\end{lemma}
}


Define
$$
\Delta\equiv\max_{x,y\in[n]}\,d(x,y)
$$
to be the diameter of $([n],d)$.

\begin{lemma}\label{thehardestparttoboundlemma}
If
\begin{eqnarray}
\delta nr \ge \Delta
\label{shallhaveanequationnumberfortheballequalsuniversething}
\end{eqnarray}
then
\begin{eqnarray}
\sum_{i=1}^{\lfloor n/2\rfloor}\,
\mathop{\mathrm E}_{\bs{\pi}}\left[\,
d^2\left(\bs{\pi}\left(2i-1\right),\bs{\pi}\left(2i\right)\right)
\,\right]
\le
\left(\frac{1}{2}+
O\left(\frac{1}{n}\right)
\right)\left(\delta n^2r\bar{r}+\delta^2 n r^2\right).
\label{boundforthedistancesquaressummed}
\end{eqnarray}
\end{lemma}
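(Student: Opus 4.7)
The plan is a straightforward averaging argument: the diameter hypothesis will let the squared-distance sum collapse to a constant times $\sum_{x,y}\,d(x,y)=n^2\bar{r}$, and no finer structure is needed for this lemma.

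First, I would observe that for each fixed $i\in[\lfloor n/2\rfloor]$ the unordered pair $\{\bs{\pi}(2i-1),\bs{\pi}(2i)\}$ is a uniformly random size-$2$ subset of $[n]$, exactly as in the proof of Lemma~\ref{expectedmatchingsize}. Applying the identity-of-indiscernibles axiom to pass from distinct pairs to all ordered pairs gives
\[
\mathop{\mathrm E}_{\bs{\pi}}\left[\,d^2\left(\bs{\pi}\left(2i-1\right),\bs{\pi}\left(2i\right)\right)\,\right]
=\frac{1}{n\cdot(n-1)}\sum_{x,y\in[n]}\,d^2(x,y).
\]

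Next, I would invoke the hypothesis~(\ref{shallhaveanequationnumberfortheballequalsuniversething}): since $d(x,y)\le\Delta\le\delta n r$ for every $x,y\in[n]$, replacing one factor of $d(x,y)$ in $d^2(x,y)$ by $\delta n r$ yields
\[
\sum_{x,y\in[n]}\,d^2(x,y)\;\le\;\delta n r\cdot\sum_{x,y\in[n]}\,d(x,y)\;=\;\delta n r\cdot n^2\bar{r}\;=\;\delta n^3 r\bar{r},
\]
using equation~(\ref{averagedistance}) in the middle step. Summing over $i$ and noting that $\lfloor n/2\rfloor/(n-1)=1/2+O(1/n)$ then produces
\[
\sum_{i=1}^{\lfloor n/2\rfloor}\,\mathop{\mathrm E}_{\bs{\pi}}\left[\,d^2\left(\bs{\pi}\left(2i-1\right),\bs{\pi}\left(2i\right)\right)\,\right]
\;\le\;\left(\frac{1}{2}+O\left(\frac{1}{n}\right)\right)\delta n^2 r\bar{r},
\]
which is already no larger than the right-hand side of~(\ref{boundforthedistancesquaressummed}) because the latter carries an extra nonnegative summand $(1/2+O(1/n))\,\delta^2 n r^2$.

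There is essentially no obstacle in this proof: the entire argument is pairwise-symmetric averaging followed by one application of the diameter bound. The only delicate point is tracking the numerical prefactor $\lfloor n/2\rfloor/(n-1)=1/2+O(1/n)$, which is routine. The additive $\delta^2 n r^2$ term on the right-hand side of~(\ref{boundforthedistancesquaressummed}) appears to be cosmetic slack rather than something the argument actually needs to generate; presumably it is retained so that the inequality matches the symmetric form of the companion estimates appearing in the variance calculation that follows.
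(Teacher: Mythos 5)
Your proof is correct, but it takes a genuinely different and substantially simpler route than the paper's. The paper reduces the bound on $\sum_{x,y\in[n]}d^2(x,y)$ to an explicit optimization problem ({\sc max square sum}: maximize the sum of squares subject to a fixed mean and the cap $0\le d_{x,y}\le\delta nr$), and then spends Appendix~\ref{analyzingthemaximizationproblem} on an exchange argument showing the optimum concentrates on at most $\lfloor n\bar{r}/(\delta r)\rfloor+1$ variables at the cap, yielding the bound $\lfloor n/2\rfloor\frac{1}{n(n-1)}(\lfloor n\bar{r}/(\delta r)\rfloor+1)(\delta nr)^2$. Your one-line substitution $d^2(x,y)\le(\delta nr)\,d(x,y)$, summed against equation~(\ref{averagedistance}), proves the same extremal bound directly (indeed a slightly tighter one: you get $(\tfrac{1}{2}+O(\tfrac{1}{n}))\delta n^2r\bar{r}$ with no $\delta^2nr^2$ term), and it renders the appendix unnecessary for this lemma. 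Your closing remark is essentially right: the extra summand $\delta^2nr^2$ in~(\ref{boundforthedistancesquaressummed}) is not needed by any argument; in the paper it is an artifact of the ``$+1$'' rounding in the optimization analysis, and downstream (Lemma~\ref{boundonthevarianceofthelengthofthematching}) it is absorbed into the $o(1)\delta n^2r^2$ slack anyway since $\delta^2nr^2=o(\delta n^2r^2)$. What the paper's heavier machinery buys is only an explicit description of the extremal distance configuration, which is never used elsewhere; your averaging argument is the preferable proof.
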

\begin{proof}
Clearly,
$\{\bs{\pi}(2i-1),\bs{\pi}(2i)\}$ is a uniformly random size-$2$ subset of
$[n]$ for each
$i\in[\,\lfloor n/2\rfloor\,]$.
Therefore,
\begin{eqnarray}
\sum_{i=1}^{\lfloor n/2\rfloor}\,
\mathop{\mathrm E}_{\bs{\pi}}\left[\,
d^2\left(\bs{\pi}\left(2i-1\right),\bs{\pi}\left(2i\right)\right)
\,\right]
&=&\sum_{i=1}^{\lfloor n/2\rfloor}\,
\frac{1}{n\cdot(n-1)}
\cdot
\sum_{\text{distinct $x$, $y\in [n]$}}\,d^2\left(x,y\right)
\,\,\,\,\,\,\,
\label{startofobjective}\\
&\le&\sum_{i=1}^{\lfloor n/2\rfloor}\,
\frac{1}{n\cdot(n-1)}
\cdot
\sum_{x, y\in [n]}\,d^2\left(x,y\right)
\nonumber\\
&=&\left\lfloor\frac{n}{2}\right\rfloor\cdot
\frac{1}{n\cdot(n-1)}
\cdot
\sum_{x, y\in [n]}\,d^2\left(x,y\right).
\nonumber
\end{eqnarray}
By inequality~(\ref{shallhaveanequationnumberfortheballequalsuniversething}),
\begin{eqnarray}
d\left(x,y\right)
\le
\delta nr
\label{constraint}
\end{eqnarray}
for all $x$, $y\in [n]$.

By equations~(\ref{averagedistance})~and~(\ref{startofobjective})--(\ref{constraint}),
the left-hand side of inequality~(\ref{boundforthedistancesquaressummed})
cannot exceed the optimal value of the following problem, called {\sc max square sum}:\\
\begin{quote}
Find
$d_{x,y}\in \mathbb{R}$
for all $x$, $y\in [n]$ to maximize
\begin{eqnarray}
\left\lfloor\frac{n}{2}\right\rfloor\cdot
\frac{1}{n\cdot(n-1)}
\cdot
\sum_{x, y\in [n]}\,d_{x,y}^2
\label{objectiveofoptimization}
\end{eqnarray}
subject to
\begin{eqnarray}
\frac{1}{n^2}\cdot
\sum_{x,y\in [n]}\, d_{x,y}
=
\bar{r},\label{averagedistanceconstraint}\\
\forall
x, y\in [n],\,\,
0\le
d_{x,y}
\le
\delta nr.\label{largestdistanceconstraint}
\end{eqnarray}
\end{quote}
Above, constraint~(\ref{averagedistanceconstraint})
(resp., (\ref{largestdistanceconstraint}))
mimics equation~(\ref{averagedistance})
(resp., inequality~(\ref{constraint}) and the
non-negativeness of distances).
Appendix~\ref{analyzingthemaximizationproblem}
bounds
the
optimal value of
{\sc max square sum}
from
above by
\begin{eqnarray}
\left\lfloor\frac{n}{2}\right\rfloor
\frac{1}{n\cdot(n-1)}
\cdot
\left(\left\lfloor\frac{n \bar{r}}{\delta r}\right\rfloor+1\right)
\cdot \left(\delta nr\right)^2.
\nonumber
\end{eqnarray}
This evaluates to be
at most
$$\left(\frac{1}{2}+O\left(\frac{1}{n}\right)\right)
\left(\delta n^2 r\bar{r}+\delta^2 n r^2\right).$$
\comment{ 
Finally,
$$\left(1\pm o(1)\right)\delta n^2 rr'\le 2\left(1+o(1)\right)\delta n^2 r^2$$
by Lemma~\ref{inneraverageandoverallaverage}.
}
\end{proof}

Recall that the variance of any random variable $X$
equals
$\mathop{\mathrm E}[X^2]-(\mathop{\mathrm E}[X])^2$.

\begin{lemma}\label{boundonthevarianceofthelengthofthematching}
If
$\delta nr\ge \Delta$,
then
$$
\mathop{\mathrm{var}}_{\bs{\pi}}\left(
\sum_{i=1}^{\lfloor n/2\rfloor}\,
d\left(\bs{\pi}\left(2i-1\right),\bs{\pi}\left(2i\right)\right)
\right)
\le
\left(1+
o(1)
\right)\delta n^2 r^2.
$$
\end{lemma}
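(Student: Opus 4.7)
The plan is to expand $\mathrm{var}(\mathbf{X}) = \mathop{\mathrm{E}}[\mathbf{X}^2] - (\mathop{\mathrm{E}}[\mathbf{X}])^2$ for $\mathbf{X} \equiv \sum_{i=1}^{\lfloor n/2\rfloor} d(\bs{\pi}(2i-1),\bs{\pi}(2i))$ and then assemble the bound from lemmas already proved. First I would combine equations~(\ref{startofthemeanofsquare})--(\ref{endofthemeanofsquare}) with Lemmas~\ref{distinctdistancesproduct}~and~\ref{thehardestparttoboundlemma} to obtain
\[
\mathop{\mathrm{E}}[\mathbf{X}^2] \le \frac{1}{4}\left(1 + O\!\left(\frac{1}{n}\right)\right) n^2 \bar{r}^2 + \left(\frac{1}{2} + O\!\left(\frac{1}{n}\right)\right)\!\left(\delta n^2 r \bar{r} + \delta^2 n r^2\right).
\]
Next I would square the exact formula from Lemma~\ref{expectedmatchingsize} to get $(\mathop{\mathrm{E}}[\mathbf{X}])^2 = (\lfloor n/2\rfloor)^2 \cdot n^2 \bar{r}^2 / (n-1)^2 = \frac{n^2 \bar{r}^2}{4}(1 + O(1/n))$.

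The key observation is that the leading $\frac{n^2 \bar{r}^2}{4}$ contributions of $\mathop{\mathrm{E}}[\mathbf{X}^2]$ and $(\mathop{\mathrm{E}}[\mathbf{X}])^2$ almost cancel, leaving only an $O(n \bar{r}^2)$ residue. For the remaining piece from Lemma~\ref{thehardestparttoboundlemma}, I would apply Lemma~\ref{inneraverageandoverallaverage} ($\bar{r} \le 2r$) to the cross term, turning $(1/2)\cdot \delta n^2 r \bar{r}$ into at most $\delta n^2 r^2$ and thereby producing the claimed main term $(1+o(1))\delta n^2 r^2$.

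The main obstacle, though routine, is checking that every residual is truly $o(\delta n^2 r^2)$. Here I would invoke the standing assumption $\epsilon = \omega(n^{-1/4})$, which via equation~(\ref{thedeltavalue}) yields $\delta = \omega(n^{-1/2})$, hence $\delta n = \omega(\sqrt{n})$. Then the near-cancellation residue $O(n \bar{r}^2) = O(n r^2)$ is a $1/(\delta n) = o(1)$ fraction of $\delta n^2 r^2$; likewise $\delta^2 n r^2$ is a $\delta/n = o(1)$ fraction of $\delta n^2 r^2$; and the $O(1/n)$ relative slack on the $\delta n^2 r \bar{r}$ term is absorbed into $(1+o(1))$. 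Collecting these pieces gives the claimed bound.
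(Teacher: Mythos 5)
Your proposal is correct and follows essentially the same route as the paper: bound $\mathop{\mathrm{E}}[\mathbf{X}^2]$ via equations~(\ref{startofthemeanofsquare})--(\ref{endofthemeanofsquare}) together with Lemmas~\ref{distinctdistancesproduct}~and~\ref{thehardestparttoboundlemma}, subtract the square of the exact mean from Lemma~\ref{expectedmatchingsize}, and then absorb all residuals using $\bar{r}\le 2r$ (Lemma~\ref{inneraverageandoverallaverage}) and $\delta=\omega(1/\sqrt{n})$. You have merely spelled out the final asymptotic bookkeeping that the paper leaves to the reader.
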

\begin{proof}
By equations~(\ref{startofthemeanofsquare})--(\ref{endofthemeanofsquare})
and
Lemmas~\ref{distinctdistancesproduct}--\ref{thehardestparttoboundlemma},
\begin{eqnarray*}
&&\mathop{\mathrm E}_{\bs{\pi}}\left[\,
\left(\sum_{i=1}^{\lfloor n/2\rfloor}\,
d\left(\bs{\pi}\left(2i-1\right),\bs{\pi}\left(2i\right)\right)\right)^2
\,\right]\\
&\le&
\frac{1}{4}\cdot \left(1
+
O\left(\frac{1}{n}\right)
\right) n^2
\bar{r}^2
+\left(
\frac{1}{2}
+
O\left(\frac{1}{n}\right)
\right)\left(\delta n^2 r\bar{r}+\delta^2 nr^2\right).
\end{eqnarray*}
This and
Lemma~\ref{expectedmatchingsize} imply
\begin{eqnarray*}
&&\mathop{\mathrm{var}}_{\bs{\pi}}\left(
\sum_{i=1}^{\lfloor n/2\rfloor}\,
d\left(\bs{\pi}\left(2i-1\right),\bs{\pi}\left(2i\right)\right)
\right)\\
&\le&
O\left(\frac{1}{n}\right)
\cdot
n^2
\bar{r}^2
+\left(
\frac{1}{2}
+
O\left(\frac{1}{n}\right)
\right)\left(\delta n^2 r\bar{r}+\delta^2 nr^2\right).
\end{eqnarray*}
Finally, invoke Lemma~\ref{inneraverageandoverallaverage} and recall that
$\delta=\omega(1/\sqrt{n})$.
\end{proof}

\begin{lemma}\label{randommatchingconcentrationlemma}
If
$\delta nr \ge \Delta$,
then
{\small 
\begin{eqnarray*}
\Pr_{\bs{\pi}}\left[
\,
\left|
\,
\left(
\sum_{i=1}^{\lfloor n/2\rfloor}
d\left(\bs{\pi}\left(2i-1\right),\bs{\pi}\left(2i\right)\right)
\right)
-
\frac{1}{2}\cdot\left(1\pm
O\left(\frac{1}{n}\right)
\right)n\bar{r}
\,
\right|
\ge
k\sqrt{\left(1+
o(1)
\right)\delta}\, nr
\,
\right]
\le
\frac{1}{k^2}
\end{eqnarray*}
for all $k>1$.
}
\end{lemma}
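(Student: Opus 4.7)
The plan is to apply Chebyshev's inequality directly to the random variable
$$S \equiv \sum_{i=1}^{\lfloor n/2\rfloor} d(\bs{\pi}(2i-1),\bs{\pi}(2i)),$$
using the two estimates we have already accumulated: the exact expectation from Lemma~\ref{expectedmatchingsize} and the variance upper bound from Lemma~\ref{boundonthevarianceofthelengthofthematching}. No new combinatorial work is needed.

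First, I would rewrite the mean. By Lemma~\ref{expectedmatchingsize},
$\mathop{\mathrm{E}}_{\bs{\pi}}[S] = \lfloor n/2\rfloor \cdot n\bar{r}/(n-1)$.
Since $\lfloor n/2\rfloor = (n/2)(1 \pm O(1/n))$ and $n/(n-1) = 1 + O(1/n)$, this product equals $\tfrac{1}{2}(1 \pm O(1/n))\, n\bar{r}$, matching exactly the centering term appearing inside the probability in the statement. So the event inside $\Pr[\cdot]$ is precisely $\{|S - \mathop{\mathrm{E}}[S]| \ge k\sqrt{(1+o(1))\delta}\, nr\}$.

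Second, assuming the hypothesis $\delta nr \ge \Delta$, Lemma~\ref{boundonthevarianceofthelengthofthematching} supplies $\mathop{\mathrm{var}}_{\bs{\pi}}(S) \le (1+o(1))\delta n^2 r^2$, hence $\sqrt{\mathop{\mathrm{var}}_{\bs{\pi}}(S)} \le \sqrt{(1+o(1))\delta}\, nr$. Therefore the threshold $k\sqrt{(1+o(1))\delta}\, nr$ is at least $k\sqrt{\mathop{\mathrm{var}}_{\bs{\pi}}(S)}$, so the event in question is contained in $\{|S - \mathop{\mathrm{E}}[S]| \ge k\sqrt{\mathop{\mathrm{var}}_{\bs{\pi}}(S)}\}$. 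Chebyshev's inequality (as stated in the preliminaries, applicable since $k > 1 \ge 1$) then bounds its probability by $1/k^2$, as claimed.

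The only mildly delicate point is a degenerate case: Chebyshev's inequality as stated requires a nonzero variance. If $\mathop{\mathrm{var}}_{\bs{\pi}}(S) = 0$, then $S$ equals its mean almost surely, so the left-hand side probability is $0$ and the bound holds trivially; I would handle this as a one-line aside. Beyond that, the proof is a mechanical composition of Lemmas~\ref{expectedmatchingsize} and~\ref{boundonthevarianceofthelengthofthematching} with Chebyshev, so I do not expect any real obstacle.
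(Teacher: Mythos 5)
Your proposal is correct and follows exactly the paper's own (one-line) proof: apply Chebyshev's inequality to the matching sum, with the mean from Lemma~\ref{expectedmatchingsize} and the variance bound from Lemma~\ref{boundonthevarianceofthelengthofthematching}. Your explicit rewriting of $\lfloor n/2\rfloor\cdot n\bar{r}/(n-1)$ as $\tfrac{1}{2}(1\pm O(1/n))n\bar{r}$ and your aside on the zero-variance case merely spell out details the paper leaves implicit.
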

\begin{proof}
Use
Chebyshev's inequality
and
Lemmas~\ref{expectedmatchingsize}~and~\ref{boundonthevarianceofthelengthofthematching}.
\end{proof}


\comment{ 
\section{Estimating the average distance on a graph}

This section presents an efficient estimation of
\begin{eqnarray}
\bar{r}\equiv \frac{1}{n^2}\cdot \sum_{x,y\in[n]}\,d\left(x,y\right)
\label{averagedistanceingeneral}
\end{eqnarray}
when $d$ is a graph metric.
As
in Sec.~\ref{approximationratiosection}, take $\epsilon=1/n^{o(1)}$ less
than a small constant and $\delta$ as in line~1 of {\sf Las Vegas median}.
Note that
the proofs of
Lemmas~\ref{pointsoutofballarebad}--\ref{innerballmediantotaldistance}
are independent from the choice of $z$ in line~3 of {\sf Las Vegas median}.
In particular, they remain to hold when
{\sf pick point} in line~3
returns a uniformly random point in $[n]$.
}

\comment{ 
\begin{lemma}\label{averagedistancecannotbetoolargewhp}
$$\Pr_{\bs{u}}\left[\,r\ge 9\bar{r}\,\right]\le\frac{1}{9}.$$
\end{lemma}
\begin{proof}
By lines~2--3
of {\sf average distance},
$$
\mathop{\mathrm{E}}_{\bs{u}}\left[\,r\,\right]
=\frac{1}{n}\cdot \sum_{u\in[n]}\,\frac{1}{n}\cdot \sum_{x\in[n]}\,
d\left(u,x\right)
\stackrel{\text{(\ref{averagedistance})}}{=}
\bar{r}.
$$
Now use Markov's inequality.
\end{proof}
}

\begin{lemma}\label{thekeyconcentration}
If
$\delta nr\ge \Delta$,
then
{\footnotesize 
$$
\Pr_{\bs{\pi}}\left[\,
\left|\,
\left(
\sum_{i=1}^{\lfloor n/2\rfloor}\,
d\left(\bs{\pi}\left(2i-1\right),\bs{\pi}\left(2i\right)\right)
\right)
-
\frac{1}{2}\cdot\left(1\pm
O\left(\frac{1}{n}\right)
\right)n\bar{r}
\,\right|
\ge
k\sqrt{\left(1+o(1)
\right)\delta}\, n\bar{r}
\,\right]
\le\frac{1}{k^2}
$$
}
for all $k>1$.
\end{lemma}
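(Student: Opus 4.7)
The plan is to derive Lemma~\ref{thekeyconcentration} as an immediate corollary of Lemma~\ref{randommatchingconcentrationlemma}. The two statements differ only in the deviation threshold on the right-hand side of the event inside $\Pr[\cdot]$: the earlier lemma has $k\sqrt{(1+o(1))\delta}\, nr$, while the present one has $k\sqrt{(1+o(1))\delta}\, n\bar{r}$. So it suffices to establish $r \le \bar{r}$; then the threshold in the present lemma is at least as large as that in the earlier one, which means the event inside $\Pr[\cdot]$ shrinks (not grows), and the $1/k^2$ bound transfers.

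The inequality $r \le \bar{r}$ is essentially already in hand from inequality~(\ref{bestisnoworsethanaverage}): the optimality of $p^*$ gives $\sum_{x \in [n]} d(p^*, x) \le \frac{1}{n} \sum_{p \in [n]} \sum_{x \in [n]} d(p, x) = n\bar{r}$, and dividing by $n$ yields $r \le \bar{r}$ through equation~(\ref{theaveragedistancefromanoptimalpoint}). The hypothesis $\delta nr \ge \Delta$ is identical in both lemmas and therefore carries over verbatim.

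The one point to be careful about is the direction of the comparison between $r$ and $\bar{r}$: Lemma~\ref{inneraverageandoverallaverage} goes the wrong way, supplying only $\bar{r} \le 2r$. What is needed here is the opposite direction, $r \le \bar{r}$, and fortunately this is supplied for free by the averaging argument in inequality~(\ref{bestisnoworsethanaverage}). Beyond noticing this, there is no substantive obstacle, and the proof reduces to a one-line citation of Lemma~\ref{randommatchingconcentrationlemma} together with the observation that $nr \le n\bar{r}$.
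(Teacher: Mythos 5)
Your proposal is correct and is essentially identical to the paper's own proof: the paper likewise deduces $r\le\bar{r}$ from inequality~(\ref{bestisnoworsethanaverage}) together with equation~(\ref{theaveragedistancefromanoptimalpoint}) and then cites Lemma~\ref{randommatchingconcentrationlemma}, with the deviation threshold only growing when $nr$ is replaced by $n\bar{r}$. Your remark that Lemma~\ref{inneraverageandoverallaverage} points in the wrong direction is a correct and careful observation, but it plays no role in the argument.
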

\begin{proof}
By
inequalities~(\ref{bestisnoworsethanaverage})~and~(\ref{theaveragedistancefromanoptimalpoint}),
$$r\le\bar{r}.$$
This and
Lemma~\ref{randommatchingconcentrationlemma}
complete the proof.
\end{proof}

We now arrive at an efficient estimation of the average distance on a graph.

\begin{theorem}\label{theoremonestimatinggraphaveragedistance}
Given $n\in\mathbb{Z}^+$,
$\epsilon=\omega(1/n^{1/4})$
and oracle access to a graph metric
$d\colon[n]\times[n]\to\mathbb{N}$,
a real number in $[\,(1-\epsilon)\bar{r},(1+\epsilon)\bar{r}\,]$
can be found in $O(n)$ time with probability
$1/2+\Omega(1)$.
\end{theorem}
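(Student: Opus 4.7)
The plan is to analyze the estimator {\sf average distance} in Fig.~\ref{mainalgorithm}: pick a uniformly random permutation $\bs{\pi}$ by the Knuth shuffle and return $(2/n)\cdot S$, where $S\equiv\sum_{i=1}^{\lfloor n/2\rfloor} d(\bs{\pi}(2i-1),\bs{\pi}(2i))$. By Lemma~\ref{expectedmatchingsize}, $\mathop{\mathrm{E}}[(2/n)\cdot S]=(1\pm O(1/n))\bar r$, so the estimator is essentially unbiased; the whole task reduces to a single concentration argument, which I obtain by invoking Lemma~\ref{thekeyconcentration} with a sufficiently large absolute constant $k$, for instance $k=20$.

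The main obstacle is that Lemma~\ref{thekeyconcentration} requires the hypothesis $\delta nr\geq\Delta$, which is not automatic in general. I plan to establish it from the graph-metric assumption together with $\epsilon=\omega(n^{-1/4})$, via the geometric inequality
\[
\Delta^2\leq 8\,nr.
\]
Fix $a,b\in[n]$ with $d(a,b)=\Delta$ and choose a shortest $a$-$b$ path $a=x_0,x_1,\ldots,x_\Delta=b$. Since every subpath of a shortest path is shortest, $d(x_i,x_j)=|i-j|$; in particular, $f(i)\equiv d(p^*,x_i)$ is $1$-Lipschitz on $\{0,1,\ldots,\Delta\}$. The triangle inequality yields $f(0)+f(\Delta)\geq\Delta$, so without loss of generality $f(0)\geq\Delta/2$, and $1$-Lipschitzness forces $f(i)\geq\Delta/2-i$ for $0\leq i\leq\lfloor\Delta/2\rfloor$. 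Summing gives $\sum_{i=0}^\Delta f(i)\geq\Delta^2/8$; since $x_0,\ldots,x_\Delta$ are distinct vertices of $[n]$, this yields $nr=\sum_{x\in[n]}d(p^*,x)\geq\Delta^2/8$. Now $\epsilon=\omega(n^{-1/4})$ gives $\delta=\epsilon^2/10^{10}=\omega(n^{-1/2})$, so $\delta^2 n\to\infty$, and for sufficiently large $n$ we have $\delta^2(n-1)\geq 8$. Together with $nr\geq n-1$ (graph-metric distances between distinct points are $\geq 1$), this yields $\delta^2 nr\geq 8$, hence $\delta nr\geq\sqrt{8\,nr}\geq\Delta$, verifying the precondition.

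With the precondition in hand, Lemma~\ref{thekeyconcentration} at $k=20$ gives $|S-\frac{1}{2}(1\pm O(1/n))n\bar r|\leq 20\sqrt{(1+o(1))\delta}\,n\bar r$ with probability $\geq 1-1/400$. Scaling by $2/n$, using $\sqrt\delta=\epsilon/10^5$, and absorbing the $O(1/n)$ drift (since $\epsilon=\omega(1/n)$) puts $(2/n)\cdot S$ into $[(1-\epsilon)\bar r,(1+\epsilon)\bar r]$ with probability $1-1/400=1/2+\Omega(1)$. The Knuth shuffle builds $\bs{\pi}$ in $O(n)$ time and the sum uses $\lfloor n/2\rfloor$ oracle queries, so the total running time is $O(n)$. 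The only place the graph-metric hypothesis enters is the inequality $\Delta^2=O(nr)$, which upgrades the $[\,1/2,\,1\,]$ ratio of Theorem~\ref{theoremforaveragedistance} to a genuine $(1\pm\epsilon)$ multiplicative approximation.
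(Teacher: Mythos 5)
Your proposal is correct and follows essentially the same route as the paper: the same estimator, the same reduction to verifying the precondition $\delta nr\ge\Delta$ of Lemma~\ref{thekeyconcentration} via a shortest path between a diametral pair (you get $nr\ge\Delta^2/8$ by a one-sided Lipschitz argument where the paper gets $nr\ge\Delta^2/4$ by pairing $v_i$ with $v_{\Delta-i}$, and you close the gap with $\delta nr\ge\sqrt{8nr}\ge\Delta$ where the paper does a short case analysis), followed by the same Chebyshev-based concentration with a constant $k$. These are only cosmetic variations; the argument is sound as written.
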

\begin{proof}
Let $G=([n],E)$ be an undirected unweighted graph inducing the distance function $d$.
Then pick $x$, $y\in [n]$ with
$d(x,y)=\Delta$, i.e., $(x,y)$ is a furthest pair of vertices of $G$.
Find
a simple
shortest $x$-$y$ path, denoted
$(v_0=x,v_1,\ldots,v_\Delta=y)$, in $G$.
By equation~(\ref{theaveragedistancefromanoptimalpoint}),
\begin{eqnarray}
r
\ge
\frac{1}{n}\cdot\sum_{i=0}^\Delta\,d\left(p^*,v_i\right).
\label{totaldistancetoverticesonalongestpath}
\end{eqnarray}
Now,
{\small 
\begin{eqnarray}
\sum_{i=0}^\Delta\,d\left(p^*,v_i\right)
=
\frac{1}{2}\cdot
\sum_{i=0}^\Delta
d\left(p^*,v_i\right)+d\left(p^*,v_{\Delta-i}\right)
\ge\frac{1}{2}\cdot\sum_{i=0}^\Delta\, d\left(v_i,v_{\Delta-i}\right)
=\frac{1}{2}\cdot \sum_{i=0}^\Delta\, \left|\,\Delta-2i\,\right|
\ge \frac{\Delta^2}{4},
\label{totaldistancestopointsonapath}
\end{eqnarray}
}
where the first inequality (resp., the second equality) follows from
the triangle inequality (resp.,
$(v_0,v_1,\ldots,v_\Delta)$
being
a shortest $v_0$-$v_\Delta$ path).\footnote{It is easy to verify that $\sum_{i=0}^\Delta\,|\,\Delta-2i\,|
=(\Delta+2)\Delta/2$
if
$\Delta\equiv 0\pmod{2}$
and $\sum_{i=0}^\Delta\,|\,\Delta-2i\,|
=(\Delta+1)^2/2$ otherwise.}
By
inequalities~(\ref{totaldistancetoverticesonalongestpath})--(\ref{totaldistancestopointsonapath}),
\begin{eqnarray}
nr
\ge\frac{\Delta^2}{4}.
\label{totaldistancetoverticesonalongpath}
\end{eqnarray}
Because $d$ is a graph metric, $d(x,y)\ge1$ for all distinct $x$, $y\in[n]$.
So
by equation~(\ref{theaveragedistancefromanoptimalpoint}),
\begin{eqnarray}
r
\ge
\frac{1}{n}\cdot \sum_{x\in[n]\setminus\{p^*\}}\,1\ge\frac{1}{2}
\label{trivialaveragedistancelowerbound}
\end{eqnarray}
for all $n\ge2$.

By
inequalities~(\ref{totaldistancetoverticesonalongpath})--(\ref{trivialaveragedistancelowerbound}),
$$
\delta nr
\ge
\delta
\cdot\max\left\{\frac{\Delta^2}{4},\frac{n}{2}\right\}.
$$
So
\begin{eqnarray}
\delta nr
\ge \Delta
\label{thesmallballwillbetheuniverse}
\end{eqnarray}
for all
sufficiently large
$n$.\footnote{If $\Delta\ge 4/\delta$, then
$\delta\Delta^2/4\ge \Delta$.
Otherwise, $\delta n/2\ge \Delta$ for all
$n>8/\delta^2$.
Finally, recall that $\delta=\omega(1/\sqrt{n})$.}
By
equation~(\ref{thedeltavalue}),
\begin{eqnarray}
3\sqrt{\left(1+o(1)\right)\delta}\le 0.1\,\epsilon
\label{theerrortermcalculated}
\end{eqnarray}
for all sufficiently large $n$.
By
inequalities~(\ref{thesmallballwillbetheuniverse})--(\ref{theerrortermcalculated}),
Lemma~\ref{thekeyconcentration} with
$k=3$
and recalling that $\epsilon=\omega(1/n^{1/4})$,
\begin{eqnarray}
\Pr_{\bs{\pi}}\left[\,
\sum_{i=1}^{\lfloor n/2\rfloor}\,d\left(\bs{\pi}(2i-1),\bs{\pi}(2i)\right)
\in \left[\,\left(\frac{1}{2}-\frac{\epsilon}{2}\right)n\bar{r},
\left(\frac{1}{2}+\frac{\epsilon}{2}\right)n\bar{r}
\,\right]
\,\right]\ge 1-\frac{1}{9}
\label{concentrationofmatchingconcluded}
\end{eqnarray}
for
all sufficiently
large $n$.
Consequently, the output of line~2 of {\sf average distance}
in Fig.~\ref{mainalgorithm}
is in $[\,(1-\epsilon)\bar{r},(1+\epsilon)\bar{r}\,]$
with probability $1/2+\Omega(1)$.
Line~1 takes $O(n)$ time by the Knuth shuffle.
Clearly,
line~2
also takes
$O(n)$ time.
\end{proof}

The time complexity of $O(n)$ in Theorem~\ref{theoremonestimatinggraphaveragedistance}
is independent of $\epsilon$.
But
for general metrics,
we do not know
whether the time complexity of $O(n/\epsilon^2)$
in
Fact~\ref{averagedistancepriorresult}
can be improved to $O(n/\epsilon^{2-\Omega(1)})$.

\comment{ 
Pick $\tilde{z}$ from $[n]$ uniformly at random.
Analogous to line~4 of {\sf Las Vegas median}, define
\begin{eqnarray}
\tilde{r}\equiv \frac{1}{n}\cdot \sum_{x\in[n]}\,d\left(\tilde{z},x\right).
\tilde{r}\equiv
\end{eqnarray}
Observe that the proofs of
Lemmas~\ref{pointsoutofballarebad}--\ref{randommatchingconcentrationlemma}
are independent from the picking of $z$.
So
Lemma~\ref{randommatchingconcentrationlemma}
remains to hold with $z$ and $r$ replaced by $\tilde{z}$ and $\tilde{r}$, respectively.
That is,
}

\comment{ 
\section{Estimating the sum of distances}

Using our results in Sec.~\ref{expectedtimesection},
we present
a Monte-Carlo algorithm,
called {\sf sum distances} in Fig.~\ref{averagedistancealgorithm},
for estimating
$\sum_{u,v\in[n]}\,d(u,v)$.

\begin{figure}
\begin{algorithmic}[1]
\STATE $\tilde{\delta}\leftarrow \epsilon^2/10000$;
\STATE $\tilde{z}\leftarrow\text{\sf Indyk median}^d(n,\epsilon)$;
\STATE $\tilde{r}\leftarrow \sum_{x\in[n]}\,d(\tilde{z},x)/n$;
\STATE Pick a uniformly random bijection $\tilde{\pi}\colon [\,|B(\tilde{z},\tilde{\delta} n\tilde{r})|\,]
\to B(\tilde{z},\tilde{\delta} n\tilde{r})$;
\STATE $S\leftarrow (2\,|B(\tilde{z},\tilde{\delta} n\tilde{r})|^2/n)
\cdot
\sum_{i=1}^{\lfloor |B(\tilde{z},\tilde{\delta} n \tilde{r})|/2\rfloor}\,
d(\tilde{\pi}(2i-1),\tilde{\pi}(2i))$;
\STATE $T\leftarrow \sum_{\text{$u$, $v\in[n]$ s.t.\ $\{u,v\}\not\subseteq B(\tilde{z},\tilde{\delta} n\tilde{r})$}}\,
d(u,v)$;
\RETURN $S+T$;
\end{algorithmic}
\caption{Algorithm {\sf sum distances} with oracle access to a metric $d\colon [n]\times[n]
\to[\,0,\infty\,)$ and with inputs $n\in\mathbb{Z}^+$,
and
$\epsilon\in(0,1)$}
\label{averagedistancealgorithm}
\end{figure}

\begin{lemma}\label{generalformofestimationerror}
In line~7 of {\sf sum distances},
$$
S+T-\sum_{u,v\in[n]}\,d\left(u,v\right)
=
\left(
\frac{2\,|B(\tilde{z},\tilde{\delta} n\tilde{r})|^2}{n}
\cdot \sum_{i=1}^{\lfloor |B(\tilde{z},\tilde{\delta} n \tilde{r})|/2\rfloor}\,
d\left(\tilde{\pi}\left(2i-1\right),\tilde{\pi}\left(2i\right)\right)
\right)
-\sum_{u,v\in B(\tilde{z},\tilde{\delta}n\tilde{r})}\,d\left(u,v\right).
$$
\end{lemma}
\begin{proof}
We have
\begin{eqnarray*}
\sum_{u,v\in[n]}\,d\left(u,v\right)
&=&\sum_{u,v\in B(\tilde{z},\tilde{\delta}n\tilde{r})}\,d\left(u,v\right)
+\sum_{\text{$u$, $v\in[n]$ s.t.\ $\{u,v\}\not\subseteq B(\tilde{z},\tilde{\delta} n\tilde{r})$}}\,
d\left(u,v\right)\\
&=&\left(\sum_{u,v\in B(\tilde{z},\tilde{\delta}n\tilde{r})}\,d\left(u,v\right)\right)
+T,
\end{eqnarray*}
where the last equality follows from line~6 of {\sf sum distances}.
Now use line~5.
\end{proof}

Similarly to equation~(\ref{smallerballaveragedefinition}),
define
\begin{eqnarray}
\tilde{r}'
\equiv
\frac{1}{|B(\tilde{z},\tilde{\delta}n\tilde{r})|^2}
\cdot \sum_{u,v\in B(\tilde{z},\tilde{\delta}n\tilde{r})}\,d\left(u,v\right).
\label{averagedistanceinthenewball}
\end{eqnarray}
Observe that the
proofs of
Lemmas~\ref{thesmallradiusballislarge}--\ref{randommatchingconcentrationlemma}
use only the following facts:
\begin{enumerate}[(i)]
\item\label{originalcondition1}
$$r
=
\frac{1}{n}\cdot \sum_{x\in[n]}\,d\left(z,x\right).$$
\item
$$r'
=
\frac{1}{|B(z,\delta nr)|^2}\cdot \sum_{u,v\in B(z,\delta nr)}\,d\left(u,v\right).$$
\item\label{originalcondition3}
$\pi\colon [\,|B(z,\delta nr)|\,]\to B(z,\delta nr)$
is
a uniformly random bijection.
\end{enumerate}
In particular, they
do not
rely
on the choices of
$z\in[n]$ and $\delta>0$.
By
lines~3--4 of {\sf sum distances}
and equation~(\ref{averagedistanceinthenewball}), conditions~(\ref{originalcondition1})--(\ref{originalcondition3})
hold with $z$, $r$, $r'$, $\delta$ and $\pi$
replaced by $\tilde{z}$, $\tilde{r}$, $\tilde{r}'$, $\tilde{\delta}$ and $\tilde{\pi}$, respectively.
Therefore,
Lemmas~\ref{thesmallradiusballislarge}--\ref{randommatchingconcentrationlemma}
remain
true with
$z$, $r$, $r'$, $\delta$ and $\pi$
replaced by $\tilde{z}$, $\tilde{r}$, $\tilde{r}'$, $\tilde{\delta}$ and $\tilde{\pi}$, respectively.
So we have the following analogies to
Lemmas~\ref{thesmallradiusballislarge},~\ref{inneraverageandoverallaverage}~and~\ref{randommatchingconcentrationlemma}.


\begin{lemma}\label{thesmallradiusballislargeanalogy}
$$
\left|\,
[n]\setminus B\left(\tilde{z},\tilde{\delta}n\tilde{r}\right)\,\right|\le\frac{1}{\tilde{\delta}}
$$
and, therefore,
$$
\left|\,B\left(\tilde{z},\tilde{\delta}n\tilde{r}\right)\,\right|\ge n-\frac{1}{\tilde{\delta}}
=\left(1-o(1)\right)n.
$$
\end{lemma}

\comment{ 
\begin{proof}
Clearly,
$$
\sum_{x\in [n]}\,d\left(\tilde{z},x\right)
\ge \sum_{x\in [n]\setminus B(\tilde{z},\tilde{\delta} n\tilde{r})}\,d\left(\tilde{z},x\right)
\ge \sum_{x\in [n]\setminus B(\tilde{z},\tilde{\delta} n\tilde{r})}\,\tilde{\delta} n\tilde{r}
=\left|\,[n]\setminus B\left(\tilde{z},\tilde{\delta} n\tilde{r}\right)\,\right|\cdot \tilde{\delta} n\tilde{r}.
$$
Then use
line~3 of {\sf sum distances}.
\end{proof}
}


\begin{lemma}\label{inneraverageandoverallaveragealternative}
$\tilde{r}'\leq 2\tilde{r}$.
\end{lemma}

\comment{ 
\begin{proof}
By
equation~(\ref{averagedistanceinthenewball}) and
the triangle inequality,
\begin{eqnarray}
\tilde{r}'
&\le&
\frac{1}{|B(\tilde{z},\tilde{\delta} n\tilde{r})|^2}
\cdot
\sum_{u, v\in B(\tilde{z},\tilde{\delta} n\tilde{r})}\,
\left(d\left(\tilde{z},u\right)+d\left(\tilde{z},v\right)\right)
\label{frominnerdistancetowholedistancenew}\\
&=&
\frac{1}{|B(\tilde{z},\tilde{\delta} n\tilde{r})|^2}
\cdot
\left|B(\tilde{z},\tilde{\delta} n\tilde{r})\right|\cdot\left(
\sum_{u\in B(\tilde{z},\tilde{\delta} n\tilde{r})}\,
d\left(\tilde{z},u\right)
+\sum_{v\in B(\tilde{z},\tilde{\delta} n\tilde{r})}\, d\left(\tilde{z},v\right)
\right)\nonumber\\
&=&
\frac{2}{|B(\tilde{z},\tilde{\delta} n\tilde{r})|}
\cdot
\sum_{u\in B(\tilde{z},\tilde{\delta} n\tilde{r})}\,
d\left(\tilde{z},u\right).\nonumber
\end{eqnarray}
Obviously,
the average distance from $\tilde{z}$ to the points in $B(\tilde{z},\tilde{\delta} n\tilde{r})$
is at most
that from $\tilde{z}$ to all points,
i.e.,
\begin{eqnarray}
\frac{1}{|B(\tilde{z},\tilde{\delta} n\tilde{r})|}
\cdot
\sum_{u\in B(\tilde{z},\tilde{\delta} n\tilde{r})}\, d\left(\tilde{z},u\right)
\le
\frac{1}{n}\cdot
\sum_{u\in [n]}\, d\left(\tilde{z},u\right).
\label{frominnerdistancetowholedistance2new}
\end{eqnarray}
Inequalities~(\ref{frominnerdistancetowholedistancenew})--(\ref{frominnerdistancetowholedistance2new})
and
line~3 of {\sf sum distances}
complete the proof.
\end{proof}
}


\begin{lemma}\label{randommatchingconcentrationlemmaanalogy}
For
all
$k>1$,
$$
\Pr\left[\,
\left|\,
\left(
\sum_{i=1}^{\lfloor|B(\tilde{z},\tilde{\delta} n\tilde{r})|/2\rfloor}\,
d\left(\tilde{\pi}\left(2i-1\right),\tilde{\pi}\left(2i\right)\right)
\right)
-
\frac{1}{2}\cdot\left(1\pm o(1)
\right)n
\tilde{r}'
\,\right|
\ge
k\sqrt{2\left(1+o(1)
\right)\tilde{\delta}}\, n\tilde{r}
\,\right]
\le\frac{1}{k^2},
$$
where the probability is taken over $\tilde{\pi}$.
\end{lemma}

By
Lemma~\ref{randommatchingconcentrationlemmaanalogy},
{\small 
$$
\Pr\left[\,
\left|\,
\left(
\sum_{i=1}^{\lfloor|B(\tilde{z},\tilde{\delta} n\tilde{r})|/2\rfloor}\,
d\left(\tilde{\pi}\left(2i-1\right),\tilde{\pi}\left(2i\right)\right)
\right)
-
\frac{1}{2}\cdot
n
\tilde{r}'
\,\right|
\ge
k\sqrt{2\left(1+o(1)
\right)\tilde{\delta}}\, n\tilde{r}
+\frac{1}{2}\cdot o(1)
n \tilde{r}'
\,\right]
\le\frac{1}{k^2}
$$
}
for all $k>1$.\footnote{It is easy to verify that
$$\Pr\left[\,\left|X-a\right|\ge c+|b|\,\right]
\le\Pr\left[\,\left|X-\left(a+b\right)\right|\ge c\,\right]$$
for all $a$, $b$, $c\in\mathbb{R}$ and for each random variable $X$.
Then take $X=\sum_{i=1}^{\lfloor|B(\tilde{z},\tilde{\delta}n\tilde{r})|/2\rfloor}\,
d(\tilde{\pi}(2i-1),\tilde{\pi}(2i))$, $a=(1/2)\cdot n\tilde{r}'$,
$b=\pm (1/2)o(1)n\tilde{r}'$
(as within $\Pr[\cdot]$ in Lemma~\ref{randommatchingconcentrationlemmaanalogy})
and
$c=k\sqrt{2(1+o(1))\tilde{\delta}}\,n\tilde{r}$.}
This is equivalent to
{\small 
\begin{eqnarray}
&&
\Pr\left[\,
\left|\,
\left(
\frac{2\,|B(\tilde{z},\tilde{\delta}n\tilde{r})|^2}{n}
\cdot\sum_{i=1}^{\lfloor|B(\tilde{z},\tilde{\delta} n\tilde{r})|/2\rfloor}\,
d\left(\tilde{\pi}\left(2i-1\right),\tilde{\pi}\left(2i\right)\right)
\right)
-
\sum_{u,v\in B(\tilde{z},\tilde{\delta} n\tilde{r})}\,d\left(u,v\right)
\,\right|\right.\nonumber\\
&&
\left.
\phantom{\left|\left(\sum_{i=1}^{\lfloor|B(\tilde{z},\tilde{\delta} n\tilde{r})|/2\rfloor}\,\right)\right|}
\ge
\frac{2\,|B(\tilde{z},\tilde{\delta}n\tilde{r})|^2}{n}\cdot
\left(
k\sqrt{2\left(1+o(1)
\right)\tilde{\delta}}\, n\tilde{r}
+\frac{1}{2}\cdot o(1)
n \tilde{r}'
\right)
\,\right]\nonumber\\
&\le&\frac{1}{k^2}\label{theestimationerrorforthetotaldistance}
\end{eqnarray}
}
by equation~(\ref{averagedistanceinthenewball}), for all $k>1$.

Lemma~\ref{generalformofestimationerror} and
inequality~(\ref{theestimationerrorforthetotaldistance})
with $k=5$
imply the following.

\begin{lemma}\label{theerrorandprobabilitycomplicatedform}
{\small 
$$\Pr\left[\,
\left|\,
S+T-\sum_{u,v\in[n]}\,d\left(u,v\right)
\,\right|
\ge
\frac{2\,|B(\tilde{z},\tilde{\delta}n\tilde{r})|^2}{n}\cdot
\left(
5\sqrt{2\left(1+o(1)
\right)\tilde{\delta}}\, n\tilde{r}
+\frac{1}{2}\cdot o(1)
n \tilde{r}'
\right)
\,\right]
\le\frac{1}{25}.$$
}
\end{lemma}

\begin{lemma}\label{theerrortermasymptotics}
For all sufficiently large $n$,
{\small 
$$
\Pr\left[\,
\frac{2\,|B(\tilde{z},\tilde{\delta}n\tilde{r})|^2}{n}\cdot
\left(
5\sqrt{2\left(1+o(1)
\right)\tilde{\delta}}\, n\tilde{r}
+\frac{1}{2}\cdot o(1)
n \tilde{r}'
\right)
\le 100\sqrt{\tilde{\delta}}\cdot \sum_{u,v\in[n]}\,d\left(u,v\right)
\,\right]\ge1-\frac{1}{e}.
$$
}
\end{lemma}
\begin{proof}
By Lemmas~\ref{thesmallradiusballislargeanalogy}--\ref{inneraverageandoverallaveragealternative},
\begin{eqnarray}
&&\frac{2\,|B(\tilde{z},\tilde{\delta}n\tilde{r})|^2}{n}\cdot
\left(
5\sqrt{2\left(1+o(1)
\right)\tilde{\delta}}\, n\tilde{r}
+\frac{1}{2}\cdot o(1)
n \tilde{r}'
\right)\nonumber\\
&\le&
2\left(1-o(1)\right)
\left(5\sqrt{2\left(1+o(1)\right)\tilde{\delta}}+o(1)
\right)n^2\tilde{r}.\label{aquickestimation}
\end{eqnarray}
By Fact~\ref{Indykfact} and line~2 of {\sf sum distances},
$$
\Pr\left[
\sum_{x\in[n]}\,d\left(\tilde{z},x\right)
\le\left(1+\epsilon\right)\cdot\min_{y\in[n]}\,\sum_{x\in[n]}\,d\left(y,x\right)
\right]\ge
1-\frac{1}{e}.
$$
Equivalently,
\begin{eqnarray}
\Pr\left[
n\tilde{r}
\le\left(1+\epsilon\right)\cdot\min_{y\in[n]}\,\sum_{x\in[n]}\,d\left(y,x\right)
\right]\ge
1-\frac{1}{e}\label{theindykresultisprobablygood}
\end{eqnarray}
by line~3.

By the averaging argument,
$$
\min_{y\in[n]}\,\sum_{x\in[n]}\,d\left(y,x\right)
\le \frac{1}{n}\cdot\sum_{y\in[n]}\,\sum_{x\in[n]}\,d\left(y,x\right).
$$
This and inequality~(\ref{theindykresultisprobablygood}) imply
\begin{eqnarray}
\Pr\left[
n^2\tilde{r}
\le\left(1+\epsilon\right)\cdot\sum_{y\in[n]}\,\sum_{x\in[n]}\,d\left(y,x\right)
\right]\ge
1-\frac{1}{e}\label{theindykresultisprobablygoodcomparedtotheaverage}
\end{eqnarray}
Inequalities~(\ref{aquickestimation})~and~(\ref{theindykresultisprobablygoodcomparedtotheaverage})
complete the proof.\footnote{Note that the right-hand side of
inequality~(\ref{aquickestimation}) is at most $(100\sqrt{\tilde{\delta}}/(1+\epsilon))\cdot
n^2\tilde{r}$ for a small $\epsilon>0$ and all sufficiently large $n$.
Also note that
$\sum_{y\in[n]}\,\sum_{x\in[n]}\,d(x,y)=\sum_{u,v\in[n]}\,d(u,v)$.}
\end{proof}

We now show an efficient estimation of $\sum_{u,v\in[n]}\,d(u,v)$.

\begin{theorem}\label{theoremonestimationofsumofdistances}
Given $n\in\mathbb{Z}^+$, a constant $\epsilon>0$ and oracle access to a metric $d\colon[n]\times[n]\to[\,0,\infty\,)$,
$\sum_{u,v\in[n]}\,d(u,v)$ can be estimated to within an additive error of
$\epsilon\cdot\sum_{u,v\in[n]}\,d(u,v)$
in $O(n/\epsilon^2)$ time
and
with an $\Omega(1)$ probability of success.
\end{theorem}
\begin{proof}
By Lemmas~\ref{theerrorandprobabilitycomplicatedform}--\ref{theerrortermasymptotics},
$$
\Pr\left[\,
\left|\,
S+T-\sum_{u,v\in[n]}\,d\left(u,v\right)
\,\right|\le 100\sqrt{\tilde{\delta}}\cdot \sum_{u,v\in[n]}\,d\left(u,v\right)
\,\right]
\ge 1-\frac{1}{25}-\frac{1}{e}=\Omega(1).
$$
So by
line~1 of {\sf sum distances},
line~7
estimates
$\sum_{u,v\in[n]}\,d(u,v)$
to within
an additive error of
$\epsilon\cdot\sum_{u,v\in[n]}\,d(u,v)$ with probability $\Omega(1)$.

By Fact~\ref{Indykfact}, line~2 of {\sf sum distances} takes $O(n/\epsilon^2)$ time.
Line~4 takes time $O(|B(\tilde{z},\tilde{\delta} n\tilde{r})|)=O(n)$ by the Knuth shuffle.
Because lines~6 queries for all the distances incident to
any point in $[n]\setminus B(\tilde{z},\tilde{\delta} n\tilde{r})$,
it
takes time
$$O\left(\left|\,[n]\setminus B\left(\tilde{z},\tilde{\delta} n\tilde{r}\right)\,\right|
\cdot n\right)
\stackrel{\text{Lemma~\ref{thesmallradiusballislargeanalogy}}}{=}O\left(\frac{n}{\tilde{\delta}}\right).$$
By line~1, $\tilde{\delta}=\Theta(\epsilon^2)$.
The other lines of {\sf sum distances} clearly take $O(n)$ time.
\end{proof}

Prior to this paper,
the
best
Monte-Carlo algorithm
for estimating $\sum_{u,v\in[n]}\,d(u,v)$
to within an additive error of
$\epsilon\cdot \sum_{u,v\in[n]}\,d(u,v)$
takes time
$O(n/\epsilon^{7/2})$ when the probability of success is set to $\Omega(1)$~\cite{Ind99}.
So
Theorem~\ref{theoremonestimationofsumofdistances}
implies an algorithm with a better running time
in terms of
$\epsilon$.
}

\appendix
\section{Analyzing {\sc max square sum}}\label{analyzingthemaximizationproblem}

\setcounter{theorem}{0}
\numberwithin{theorem}{section}

{\sc Max square sum} has an optimal solution, denoted
$\{\tilde{d}_{x,y}\in\mathbb{R}\}_{x,y\in [n]}$,
because
its
feasible solutions
(i.e., those satisfying
constraints~(\ref{averagedistanceconstraint})--(\ref{largestdistanceconstraint}))
form a
closed and bounded
subset of
$\mathbb{R}^{(n^2)}$.
(Recall from elementary mathematical analysis that a continuous
real-valued function on a
closed and bounded
subset of $\mathbb{R}^k$
has a maximum value, where $k<\infty$.)
Note that
$\{\tilde{d}_{x,y}\in\mathbb{R}\}_{x,y\in [n]}$
must be feasible to {\sc max square sum}.
Below is a consequence of constraint~(\ref{averagedistanceconstraint}).

\begin{lemma}\label{maximumnumberoflargestvaluevariables}
\begin{eqnarray}
\left|
\left\{
\left(x,y\right)\in
[n]\times[n]
\mid \tilde{d}_{x,y}=\delta nr
\right\}
\right|
\le \left\lfloor\frac{n\bar{r}}{\delta r}\right\rfloor.
\label{maximumnumberoflargestvaluevariablesinequality}
\end{eqnarray}
\end{lemma}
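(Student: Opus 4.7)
The plan is a direct averaging argument using only constraint~(\ref{averagedistanceconstraint}) together with the non-negativity half of constraint~(\ref{largestdistanceconstraint}). Let $k$ denote the cardinality in the left-hand side of inequality~(\ref{maximumnumberoflargestvaluevariablesinequality}), i.e., the number of pairs at which $\tilde{d}_{x,y}$ attains the upper bound $\delta n r$.

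First I would split the sum $\sum_{x,y \in [n]} \tilde{d}_{x,y}$ into the $k$ saturated pairs and the remaining $n^2 - k$ pairs. The saturated pairs contribute exactly $k \cdot \delta n r$, while the remaining pairs contribute at least $0$ because $\tilde{d}_{x,y} \ge 0$ for all $x$, $y \in [n]$ by constraint~(\ref{largestdistanceconstraint}). Thus
$$
k \cdot \delta n r \;\le\; \sum_{x,y \in [n]} \tilde{d}_{x,y} \;=\; n^2 \bar{r},
$$
where the equality is constraint~(\ref{averagedistanceconstraint}).

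Next I would rearrange, using the fact that $r > 0$ by equation~(\ref{theaveragedistancefromanoptimalpoint}) (so that division is legal), to obtain $k \le n\bar{r}/(\delta r)$. Finally, since $k$ is a non-negative integer, I would take the floor of the right-hand side to conclude inequality~(\ref{maximumnumberoflargestvaluevariablesinequality}).

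Honestly there is no real obstacle here: the argument is a one-line pigeonhole bound, and the only points requiring any care are citing the correct constraint labels from the formulation of {\sc max square sum} and observing that $\{\tilde{d}_{x,y}\}$ is feasible (which was already established in the paragraph preceding the lemma). No use of the metric structure, the optimality of $\{\tilde{d}_{x,y}\}$, or the specific form of the objective~(\ref{objectiveofoptimization}) is needed.
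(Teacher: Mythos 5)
Your proposal is correct and matches the paper's own proof: both lower-bound $\sum_{x,y\in[n]}\tilde{d}_{x,y}=n^2\bar{r}$ by $k\cdot\delta nr$ using non-negativity of the remaining terms, divide by $\delta nr$, and invoke the integrality of $k$ to pass to the floor. There is nothing to add.
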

\begin{proof}
Clearly,
$$
n^2 \bar{r}
\stackrel{\text{(\ref{averagedistanceconstraint})}}{=}
\sum_{x,y\in [n]}\,\tilde{d}_{x,y}
\ge
\left|
\left\{
\left(x,y\right)\in
[n]\times[n]
\mid \tilde{d}_{x,y}=\delta nr
\right\}
\right|
\cdot \delta nr.
$$
Furthermore, the left-hand side of
inequality~(\ref{maximumnumberoflargestvaluevariablesinequality})
is an integer.
\end{proof}

\begin{lemma}\label{supportofoptimalsolution}
$$
\left|
\left\{
\left(x,y\right)\in
[n]\times[n]
\mid \tilde{d}_{x,y}>0
\right\}
\right|
\le \left\lfloor\frac{n\bar{r}}{\delta r}\right\rfloor+1.
$$
\end{lemma}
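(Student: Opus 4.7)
The plan is to exploit the strict convexity of the objective~(\ref{objectiveofoptimization}): maximizing a convex function over a polytope concentrates the optimum at an extreme point, which in our case means that almost every coordinate $\tilde{d}_{x,y}$ must take the boundary value $0$ or $\delta nr$.

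More concretely, I would first establish the auxiliary claim that \emph{at most one} entry $\tilde{d}_{x,y}$ lies strictly inside the open interval $(0,\delta nr)$. The argument is a standard swap: if two distinct pairs $(a,b)$ and $(c,d)$ both satisfied $\tilde{d}_{a,b},\tilde{d}_{c,d}\in(0,\delta nr)$, then for every sufficiently small $|\epsilon|>0$, the perturbation $\tilde{d}_{a,b}\mapsto\tilde{d}_{a,b}+\epsilon$, $\tilde{d}_{c,d}\mapsto\tilde{d}_{c,d}-\epsilon$ would still satisfy the box constraint~(\ref{largestdistanceconstraint}) and would preserve the linear constraint~(\ref{averagedistanceconstraint}) exactly. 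The induced change in the sum of squares equals
$$
2\epsilon\left(\tilde{d}_{a,b}-\tilde{d}_{c,d}\right)+2\epsilon^2,
$$
which is strictly positive for $\epsilon$ chosen to have the same sign as $\tilde{d}_{a,b}-\tilde{d}_{c,d}$ (or, when the two interior values coincide, for any $\epsilon\ne 0$, thanks to the $2\epsilon^2$ term). This contradicts the optimality of $\{\tilde{d}_{x,y}\}$.

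With the auxiliary claim in hand, I would partition the support $\{(x,y):\tilde{d}_{x,y}>0\}$ into the pairs attaining $\tilde{d}_{x,y}=\delta nr$ and the (at most one) pair with $\tilde{d}_{x,y}\in(0,\delta nr)$. Lemma~\ref{maximumnumberoflargestvaluevariables} bounds the former count by $\lfloor n\bar{r}/(\delta r)\rfloor$, and the auxiliary claim bounds the latter count by $1$, yielding the stated inequality.

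There is no real obstacle here: the only thing worth checking is that the local perturbation remains feasible, which is immediate since $\tilde{d}_{a,b}$ and $\tilde{d}_{c,d}$ are interior to the box $[0,\delta nr]$, and the linear sum constraint is preserved by construction.
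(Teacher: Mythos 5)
Your proposal is correct and is essentially the paper's own argument: the paper likewise combines Lemma~\ref{maximumnumberoflargestvaluevariables} with the same $\pm\beta$ swap on two entries interior to $(0,\delta nr)$, computing the identical objective increase $2\beta(\tilde{d}_{x',y'}-\tilde{d}_{x'',y''})+2\beta^2>0$ to contradict optimality. The only difference is organizational — you isolate ``at most one interior entry'' as an auxiliary claim and then add the counts, whereas the paper assumes the support exceeds the bound and derives the existence of two interior entries directly.
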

\begin{proof}
Assume otherwise.
Then
\begin{eqnarray*}
&&\left|
\left\{
\left(x,y\right)\in
[n]\times[n]
\mid \left(\tilde{d}_{x,y}>0\right)\land\left(\tilde{d}_{x,y}\neq \delta nr\right)
\right\}
\right|\\
&\ge&
\left|
\left\{
\left(x,y\right)\in
[n]\times[n]
\mid \tilde{d}_{x,y}>0
\right\}
\right|
-\left|
\left\{
\left(x,y\right)\in
[n]\times[n]
\mid \tilde{d}_{x,y}=\delta nr
\right\}
\right|\\
&\ge&
\left\lfloor\frac{n\bar{r}}{\delta r}\right\rfloor+2
-\left|
\left\{
\left(x,y\right)\in
[n]\times[n]
\mid \tilde{d}_{x,y}=\delta nr
\right\}
\right|\\
&\stackrel{\text{Lemma~\ref{maximumnumberoflargestvaluevariables}}}{\ge}&
2.
\end{eqnarray*}
So by
constraint~(\ref{largestdistanceconstraint}) (and the feasibility of
$\{\tilde{d}_{x,y}\}_{x,y\in [n]}$ to {\sc max square sum}),
$$
\left|
\left\{
\left(x,y\right)\in
[n]\times[n]
\mid 0<\tilde{d}_{x,y}<\delta nr
\right\}
\right|
\ge 2.
$$
Consequently,
there exist distinct $(x',y')$,
$(x'',y'')\in [n]\times[n]$
satisfying
\begin{eqnarray}
0<\tilde{d}_{x',y'},\, \tilde{d}_{x'',y''}<\delta nr.
\label{thenonfullvariable1}
\end{eqnarray}
By symmetry, assume
$\tilde{d}_{x',y'}\ge \tilde{d}_{x'',y''}$.
By
inequality~(\ref{thenonfullvariable1}),
there exists a small real number $\beta>0$
such that
increasing $\tilde{d}_{x',y'}$ by $\beta$ and simultaneously
decreasing $\tilde{d}_{x'',y''}$ by $\beta$
will preserve
constraints~(\ref{averagedistanceconstraint})--(\ref{largestdistanceconstraint}).
I.e., the solution $\{\hat{d}_{x,y}\in\mathbb{R}\}_{x,y\in [n]}$ defined below is
feasible
to
{\sc max square sum}:
\begin{eqnarray}
\hat{d}_{x,y}=
\left\{
\begin{array}{ll}
\tilde{d}_{x',y'}+\beta, & \text{if $(x,y)=(x',y')$},\\
\tilde{d}_{x'',y''}-\beta, & \text{if $(x,y)=(x'',y'')$},\\
\tilde{d}_{x,y},&\text{otherwise}.
\end{array}
\right.\label{variatedsolution}
\end{eqnarray}

Clearly,
objective~(\ref{objectiveofoptimization})
w.r.t.\ $\{\hat{d}_{x,y}\}_{x,y\in[n]}$
exceeds that w.r.t.\
$\{\tilde{d}_{x,y}\}_{x,y\in [n]}$
by
\begin{eqnarray*}
&&\left\lfloor\frac{n}{2}\right\rfloor\cdot
\frac{1}{n\cdot(n-1)}
\cdot
\sum_{x,y\in [n]}\,\left({\hat{d}}^2_{x,y}-{\tilde{d}}^2_{x,y}
\right)\\
&\stackrel{\text{(\ref{variatedsolution})}}{=}&
\left\lfloor\frac{n}{2}\right\rfloor\cdot
\frac{1}{n\cdot(n-1)}
\cdot
\left(
\left(\tilde{d}_{x',y'}+\beta\right)^2+
\left(\tilde{d}_{x'',y''}-\beta\right)^2
-{\tilde{d}}^2_{x',y'}-{\tilde{d}}^2_{x'',y''}
\right)\\
&=&
\left\lfloor\frac{n}{2}\right\rfloor\cdot
\frac{1}{n\cdot(n-1)}
\cdot
\left(
2\beta\tilde{d}_{x',y'}-2\beta\tilde{d}_{x'',y''}
+2\beta^2
\right)\\
&>&0,
\end{eqnarray*}
where the inequality holds
because $\tilde{d}_{x',y'}\ge \tilde{d}_{x'',y''}$ and
$\beta>0$.

In summary,
$\{\hat{d}_{x,y}\}_{x,y\in [n]}$ is a feasible solution
to {\sc max square sum}
achieving a greater
objective~(\ref{objectiveofoptimization})
than the optimal solution
$\{\tilde{d}_{x,y}\}_{x,y\in[n]}$ does, a contradiction.
\end{proof}

We now
bound the optimal value of
{\sc max square sum}.

\begin{theorem}
The optimal value of {\sc max square sum}
is at most
$$
\left\lfloor\frac{n}{2}\right\rfloor\cdot
\frac{1}{n\cdot(n-1)}
\cdot
\left(
\left\lfloor
\frac{n\bar{r}}{\delta r}
\right\rfloor+1
\right)
\cdot \left(\delta nr\right)^2
$$
\end{theorem}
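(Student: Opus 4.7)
The plan is to reduce the theorem to the two preceding structural lemmas about the optimal solution $\{\tilde{d}_{x,y}\}_{x,y\in[n]}$, namely Lemma~\ref{supportofoptimalsolution} on the support size and constraint~(\ref{largestdistanceconstraint}) on each entry's magnitude. The point is that the objective~(\ref{objectiveofoptimization}) is a scaled sum of squares, so once we know the support of the optimum is small and each entry is capped, the upper bound is immediate.

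Concretely, I would first observe that only the nonzero $\tilde{d}_{x,y}$ contribute to $\sum_{x,y\in[n]}\,\tilde{d}_{x,y}^{\,2}$. By Lemma~\ref{supportofoptimalsolution}, the set $\{(x,y)\in[n]\times[n]\mid \tilde{d}_{x,y}>0\}$ has cardinality at most $\lfloor n\bar{r}/(\delta r)\rfloor+1$. Next, by constraint~(\ref{largestdistanceconstraint}), each such nonzero entry satisfies $\tilde{d}_{x,y}\le \delta nr$, hence $\tilde{d}_{x,y}^{\,2}\le (\delta nr)^2$. Combining these two observations gives
$$
\sum_{x,y\in[n]}\,\tilde{d}_{x,y}^{\,2}
\;\le\;
\left(\left\lfloor\frac{n\bar{r}}{\delta r}\right\rfloor+1\right)\cdot (\delta nr)^2.
$$

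Finally, I would multiply both sides by the prefactor $\lfloor n/2\rfloor/(n\cdot(n-1))$ appearing in the objective~(\ref{objectiveofoptimization}) to obtain exactly the claimed bound on the optimal value of {\sc max square sum}. There is no real obstacle here: the work of actually selecting an extremal configuration has already been done in Lemmas~\ref{maximumnumberoflargestvaluevariables}~and~\ref{supportofoptimalsolution} via the exchange argument that showed all the mass is concentrated on at most $\lfloor n\bar{r}/(\delta r)\rfloor+1$ coordinates, each saturating the upper constraint. The only point of care is to note that entries not in the support contribute zero to both the sum and the sum of squares, so the support bound is precisely what is needed to collapse the quadratic objective to the stated product.
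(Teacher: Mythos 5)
Your proposal is correct and follows essentially the same route as the paper: restrict the sum of squares to the support of the optimal solution, bound each nonzero entry's square by $(\delta nr)^2$ via constraint~(\ref{largestdistanceconstraint}), and bound the support size by Lemma~\ref{supportofoptimalsolution}. Nothing is missing.
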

\begin{proof}
W.r.t.\
the optimal (and thus feasible) solution
$\{\tilde{d}_{x,y}\}_{x,y\in [n]}$,
objective~(\ref{objectiveofoptimization}) equals
\begin{eqnarray*}
&&\left\lfloor\frac{n}{2}\right\rfloor\cdot
\frac{1}{n\cdot(n-1)}
\cdot
\sum_{x,y\in [n]}\,
\chi\left[\tilde{d}_{x,y}\neq 0\right]\cdot {\tilde{d}}^2_{x,y}\\
&\stackrel{\text{(\ref{largestdistanceconstraint})}}{\le}&
\left\lfloor\frac{n}{2}\right\rfloor\cdot
\frac{1}{n\cdot(n-1)}
\cdot
\sum_{x,y\in [n]}\,
\chi\left[\tilde{d}_{x,y}>0\right]\cdot \left(\delta nr\right)^2,
\end{eqnarray*}
where $\chi[P]=1$ if $P$ is true and $\chi[P]=0$ otherwise, for any
predicate $P$.
Now invoke Lemma~\ref{supportofoptimalsolution}.
\end{proof}

\comment{ 
By the triangle inequality,
\begin{eqnarray*}
&&\sum_{\text{distinct $u$, $v\in B(z,\delta nr)$}}\,
d\left(u,v\right)\\
&\le&
\sum_{\text{distinct $u$, $v\in B(z,\delta nr)$}}\,
\left(d\left(z,u\right)+d\left(z,v\right)\right)\\
&=&
\left(\left|B(z,\delta nr)\right|-1\right)\cdot
\sum_{u\in B(z,\delta nr)}\,
d\left(z,u\right)
+\left(\left|B(z,\delta nr)\right|-1\right)\cdot
\sum_{v\in B(z,\delta nr)}\,
d\left(z,v\right).
\end{eqnarray*}
}

\bibliographystyle{plain}
\bibliography{las_vegas_median_merged}

\noindent

\end{document}